\newcommand{\classFont}[1]{\protect\ensuremath{\mathsf{#1}}\xspace}
\newcommand{\sub}{\subseteq}
\def\boto{\mkern1.5mu\bot\mkern2.5mu}
\newcommand{\True}{\classFont{True_{\bot}}}
\newcommand{\NP}{\classFont{NP}}
\newcommand{\Wsym}{\classFont{W}}
\newcommand{\W}[1]{\classFont{W}{[#1]}}
\newcommand{\IAD}{\problemFont{ISD}}
\newcommand{\FPT}{\problemFont{FPT}}
\newcommand{\IIAD}{\problemFont{IISD}}
\newcommand{\IAcover}{\problemFont{IScover}}
\newcommand{\MINCOVER}{\classFont{MPCOVER}}
\newcommand{\CANDIDATES}{\classFont{CANDIDATES}}
\newcommand{\NEWCANDIDATES}{\classFont{NEWCANDIDATES}}
\newcommand{\N}{\mathbb{N}}
\newcommand{\tuple}[1]{\vec{#1}}
\newcommand{\problemFont}[1]{\protect\ensuremath{\textsf{#1}}}
\def\mystrut(#1,#2){\vrule height #1 depth #2 width 0pt}
\newcolumntype{C}[1]{%
   >{\mystrut(3ex,2ex)\centering}%
   p{#1}%
   <{}}
\begin{document}
	
	\title{An Algorithm for the Discovery of \\ Independence from Data}
	
	\titlerunning{Discovering Independence}        
	
	\author{Miika Hannula     \and Bor-Kuan Song \and
		Sebastian Link
	}

	\institute{M. Hannula \at
		University of Helsinki \\
		Finland \\
		\email{miika.hannula@helsinki.fi}
		\and B.K. Song \at
		The University of Auckland\\
		New Zealand \\
		\email{bson010@aucklanduni.ac.nz}
		\and S. Link \at
		The University of Auckland\\
		New Zealand \\
		\email{s.link@auckland.ac.nz}
	}
	
	\date{\today}

\maketitle

\begin{abstract}
For years, independence has been considered as an important concept in many disciplines. Nevertheless, we present the first research that investigates the discovery problem of independence in data. In its arguably simplest form, independence is a statement between two sets of columns expressing that for every two rows in a table there is also a row in the table that coincides with the first row on the first set of columns and with the second row on the second set of columns. We show that the problem of deciding whether there is an independence statement that holds on a given table is not only \NP-complete but \W{3}-complete in its arguably most natural parameter, namely its arity. We establish the first algorithm to discover all independence statement that hold on a given table. We illustrate in experiments with benchmark data that our algorithm performs well within the limits established by our hardness results. In practice, it is often useful to determine the ratio with which independence statements hold on a given table. For that purpose, we show that our treatment of independence and the design of our algorithm enables us to extend our findings to approximate independence. In our final experiments, we provide some insight into the trade-off between run time and the approximation ratio. Naturally, the smaller the ratio, the more approximate independence statements hold, and the more time it takes to discover all of them. While this research establishes first insight into the computational properties of discovering independence from data, we hope to initiate research into more sophisticated notions of independence, including embedded multivalued dependencies, as well as their context-specific and probabilistic variants.
\end{abstract}

\keywords{Algorithm \and Database \and Data mining \and Experiment \and Independence \and Parameterized Intractability}

\section{Introduction}

Independence is a fundamental concept in areas as diverse as artificial intelligence, databases, probability theory, social choice theory, and statistics \cite{dawid:1979,halpern:2005,pearl:1988}. Indeed, valid independencies often facilitate efficient computations, effective analytics, and knowledge discovery tasks. This holds to an extent where independence is often simply assumed to hold. Classical examples include the independence assumptions in databases \cite{DBLP:journals/pvldb/ChaudhuriNR09,DBLP:conf/vldb/PoosalaI97}, the independence assumption in artificial intelligence \cite{Koller:2009:PGM:1795555,DBLP:journals/ai/PednaultZM81}, or the independence assumption in information retrieval \cite{DBLP:conf/ecml/Lewis98}. However, such assumptions are often wrong and lead to incorrect results. Obviously, we would like to move away from such guesswork, and have reliable information on what independence structures do hold in our data.

We therefore ask the question how difficult it is to discover independence. As a plethora of notions for independence exist and the foundations of many disciplines deeply depend on these concepts, it is surprising that their discovery problem has not received much attention. In fact, data profiling has evolved to becoming an important area of database research and practice. The discovery problem has been studied in depth for many important classes of data dependencies, such as unique column combinations, foreign keys, or functional dependencies \cite{Heise:2013,Papenbrock:2015,Zhang:2010}. However, data dependencies related to the concepts of independence have not received much attention. We thus pick the arguably simplest notion of independence to initiate more research into this area. Our notion of independence is simply stated in the context of a given finite set $r$ of tuples over a finite set $R$ of attributes. We denote the independence statement (IS) between two attribute subsets $X,Y$ of $R$ by $X\boto Y$. We say that the IS $X\boto Y$ \emph{holds on} $r$ if and only if for every pair of tuples $t_1,t_2\in r$ there is some tuple $t\in r$ such that $t(X)=t_1(X)$ and $t(Y)=t_2(Y)$. In other words, $X\boto Y$ holds on $r$ if and only if the projection $r(XY)$ of $r$ onto $X\cup Y$ is the Cartesian product $r(X)\times r(Y)$ of its projections $r(X)$ onto $X$ and $r(Y)$ onto $Y$. \emph{Discovery of independence} now refers to the problem of computing all ISs over $R$ that hold on a given relation $r$ over $R$.

As an illustrative real-world like example let us project the tuples 1, 2, 3, and 7110 of the discovery benchmark data set \emph{adult} onto its columns 1, 4, 8, 9, and 10, as shown in Table~\ref{t:ex}. Here, the IS \[\emph{education}\boto\emph{relationship}\] holds but not the IS \[\emph{education}\boto\emph{relationship},\emph{sex}\] since there is no tuple with projection (bachelors, not-in-family, female). Also, the IS \[\emph{race}\boto\emph{race}\] holds, expressing that the column \emph{race} has at exactly one value (that is, constant).

\begin{table}
\centering
\caption{Relation \emph{adult\_sub} of \emph{adult} data set\label{t:ex}}
\begin{tabular}{ccccc}\hline
\emph{{\bf a}ge} & \emph{{\bf e}ducation} & \emph{{\bf re}lationship} & \emph{{\bf ra}ce} & \emph{{\bf s}ex} \\ \hline
39 & bachelors & not-in-family & white & male \\
50 & bachelors & husband & white & male \\
38 & hs-grad & not-in-family & white & male \\
34 & hs-grad & husband & white & female \\ \hline
\end{tabular}
\end{table}

The contributions and organization of our article are summarized as follows. After stating the discovery problem formally in Section~\ref{s:problem}, we establish the computational complexity for a natural decision variant of the discovery problem in Section~\ref{s:complexity}. Indeed, the decision variant is \NP-complete and $\W{3}$-complete in the arity of the input. This is only the second natural problem to be $\W{3}$-complete, to the best of our current knowledge. In Section~\ref{s:alg} we propose the first algorithm for the discovery problem. It is worst-case exponential in the number of given columns, but can handle many real-world data sets efficiently. This is illustrated in Section~\ref{s:experiments}, where we apply the algorithm to several real-world discovery benchmark data sets. In Section~\ref{sec:approximate} we show that our discovery algorithm can be easily augmented to discover approximate independence statements. In the same section we illustrate on our benchmark data the tradeoff between runtime efficiency for discovering approximate ISs and the minimum bounds with which these approximate ISs must hold on the data set. We report on related work in Section~\ref{sec:related}, and give a detailed plan of future work in Section~\ref{sec:future}. We conclude in Section~\ref{s:conclusion}.

We envision at least four areas of impact: 1) initiating research on the discovery problem for notions of independence in different disciplines such as databases and artificial intelligence, 2) laying the foundations for the discovery of more sophisticated notions of independence, such as context-specific or probabilistic independence, 3) replacing independence assumptions by knowledge about which independence statements hold or to which degree they hold, in order to facilitate more accurate and more efficient computations, for example in cardinality estimations for query planning, and 4) augmenting current data profiling tools by classes of independence statements.

\section{Problem Statement}\label{s:problem}

Let $\mathfrak{A}=\{A_1,A_2,\ldots\}$ be a (countably) infinite set of symbols, called \emph{attributes}. A \emph{schema} is a finite set $R=\{A_1,\ldots,A_n\}$ of attributes from $\mathfrak{A}$. Each attribute $A$ of a schema is associated with a domain $\textit{dom}(A)$ which represents the set of values that can occur in column $A$. A \emph{tuple} over $R$ is a function $t:R\rightarrow\bigcup_{A\in R}\textit{dom}(A)$ with $t(A)\in\textit{dom}(A)$ for all $A\in R$. For $X\subseteq R$ let $t(X)$ denote the restriction of the tuple $t$ over $R$ on $X$. A \emph{relation} $r$ over $R$ is a finite set of tuples over $R$. Let $r(X)=\{t(X)\mid t\in r\}$ denote the \emph{projection} of the relation $r$ over $R$ on $X\subseteq R$. For attribute sets $X$ and $Y$ we often write $XY$ for their set union $X\cup Y$.

Intuitively, an attribute set $X$ is independent of an attribute set $Y$, if $X$-values occur independently of $Y$-values. That is, the independence holds on a relation, if every $X$-value that occurs in the relation occurs together with every $Y$-value that occurs in the relation. Therefore, we arrive at the following concept of independence. An \emph{independence statement} (IS) over relation schema $R$ is an expression $X\boto Y$ where $X$ and $Y$ are subsets of $R$. A relation $r$ over $R$ is said to \emph{satisfy} the IS $X\boto Y$ over $R$ if and only if for all $t_1,t_2\in r$ there is some $t\in r$ such that $t(X)=t_1(X)$ and $t(Y)=t_2(Y)$. If $r$ does not satisfy $X\boto Y$, then we also say that $r$ \emph{violates} $X\boto Y$. Alternatively, we also say that $X\boto Y$ holds on $r$ or does not hold on $r$, respectively. In different terms, given disjoint $X$ and $Y$ $r$ satisfies $X\boto Y$ if and only if $r(XY)=r(X)\times r(Y)$. Hence, for a relation $r$ which satisfies the IS $X\boto Y$, the projection $r(XY)$ is the lossless Cartesian product of the projections $r(X)$ and $r(Y)$.

In what follows we use $\Sigma$ to denote sets of ISs, usually over some fixed schema $R$. We say that $\Sigma' \sub \Sigma$ is a \emph{quasi-cover} of $\Sigma$ if for all $X\boto Y$ there is some $X'\boto Y'\in \Sigma'$ such that $X\sub X'\wedge Y\sub Y'$ or $Y\sub X'\wedge X\sub Y'$. $\Sigma'$ is called a \emph{cover} of $\Sigma$ if $\Sigma'\models \Sigma$. A (quasi-)cover is called \emph{minimal} if none of its proper subsets is a (quasi-)cover. Minimal quasi-covers are unique up to the symmetry: $X\boto Y$ holds if and only if $Y\boto X$ holds. Let $\True(r)$ be the set of all ISs satisfied by $r$. The discovery problem is to compute a minimal cover of $\True(r)$ as stated in Table~\ref{discproblem}.

\begin{table}
\begin{center}
\caption{Independence discovery \label{discproblem}}
\begin{tabular}{rl}
\toprule
\textbf{Problem:} & \IAcover\\\hline
\textbf{Input:} &A relation $r$\\
\textbf{Output:} &A minimal cover of $\True(r)$\\ \hline
\end{tabular}
\end{center}
\end{table}

\section{Likely Intractability}\label{s:complexity}

Before attempting the design of any efficient algorithms for the discovery problem of independence statements it is wise
to look at the computational complexity of the underlying problem. Indeed, we can establish the \NP- and $\W{3}$-completeness,
which provides great insight into the limitations and opportunities for efficient solutions in practice. In fact, the
computational difficulty of the discovery problem for independence statements is therefore on par with that of inclusion
dependencies.

 \begin{table}
\begin{center}
\caption{Decision variant of independence discovery \label{decproblem}}
\begin{tabular}{rl}\hline
\textbf{Problem:} & $\IAD$\\ \hline
\textbf{Input:} &A relation $r$ and a natural number $k$ \\
\textbf{Output:} &Yes iff $r$ satisfies a non-trivial \\
                 & $k$-ary independence statement. \\ \hline
\end{tabular}
\end{center}
\end{table}

\subsection{Independence Discovery is $\NP$-complete}

First we show that independence discovery is $\NP$-complete. The decision version of the discovery problem is given in terms of the arity of the independence statement.
\begin{definition}
The arity of an independence statement $X\boto Y$ is defined as the number of its distinct attributes $k=|X\cup Y|$. Such an independence statement is then called $k$-ary.
\end{definition}
Note that a relation $r$ satisfies some $k$-ary independence statement if and only if it satisfies some $l$-ary independence statement for $l\geq k$. The decision version of the problem, hereafter referred to as \IAD, is now given in Table \ref{decproblem}. We call an independence statement $X\boto Y$ \emph{non-trivial} if both $X$ and $Y$ are non-empty.

For $\NP$-hardness of \IAD, we reduce from the node biclique problem which is the problem of finding a maximal subset of nodes that induce a complete biclique. Given a graph $G=(V,E)$ and a natural number $k$, the \emph{node biclique problem} is to decide whether there are two disjoint non-empty sets of nodes $V_1,V_2$ such that  $|V_1|+|V_2|=k$; $u\in V_1,v\in V_2$ implies $\{u,v\}\in E$; and $u,v\in V_i$ implies $\{u,v\}\not\in E$ for $i=1,2$. That this problem is $\NP$-hard follows from arguments in \cite{Yannakakis81a}. A simple way to show $\NP$-hardness is to reduce from the \emph{independent set} problem (the following reduction is from \cite{Hochbaum98}). Given a graph $G=(V,E)$ construct a graph $G^2$ by taking a distinct copy $V'$ of the node set $V$ and $E'$ of the edge set $E$, and by adding an edge between each node in $V$ and each node in $V'$. Then  a biclique in $G^2$ is any pair of independent sets from $V$ and $V'$, which means that maximizing the size of the biclique in $G^2$ maximizes the size of the independent set in $G$. Note that the node biclique problem has several variants, and the exact way to formulate the problem may have an effect on its complexity. For instance, if $V_1$ and $V_2$ are allowed to have internal edges or if $G$ is bipartite, then the problem can be decided in polynomial time \cite{GareyJ79,Hochbaum98}.

\begin{theorem}
$\IAD$ is $\NP$-complete.
\end{theorem}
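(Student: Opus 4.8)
The plan is to prove both membership and hardness. Membership in $\NP$ is straightforward: a certificate is a candidate statement $X \boto Y$ itself. Given it, one checks in polynomial time that it is non-trivial ($X$ and $Y$ both non-empty), that $|X \cup Y| = k$, and that it holds on $r$. Since $r(XY) \subseteq r(X) \times r(Y)$ always holds, the statement holds exactly when $|r(XY)| = |r(X)| \cdot |r(Y)|$, and all three projections are computable in polynomial time. Hence $\IAD \in \NP$, and the real work lies in the hardness reduction from the node biclique problem.

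For hardness, given $G = (V, E)$ and $k$, I would build a relation $r$ over the schema $R = V$ (one Boolean attribute per vertex) consisting of the following tuples: one all-zero tuple $\mathbf{0}$; for each vertex $v$ a tuple $e_v$ that is $1$ in column $v$ and $0$ elsewhere; and for each edge $\{u,v\} \in E$ a tuple $f_{uv}$ that is $1$ in columns $u$ and $v$ and $0$ elsewhere. This is polynomial, with $1 + |V| + |E|$ tuples. The claim to establish is that, for disjoint non-empty $X, Y \subseteq V$, the statement $X \boto Y$ holds on $r$ if and only if $(X, Y)$ is a node biclique of $G$, i.e.\ $X$ and $Y$ are internally independent and completely joined to each other. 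Then, since any holding non-trivial statement must have $X \cap Y = \emptyset$ (if some $w \in X \cap Y$, the value pairing $1$ on column $w$ of the $X$-side with $0$ on column $w$ of the $Y$-side lies in $r(X)\times r(Y)$ but is realized by no single tuple, so the statement fails), the relation $r$ satisfies a non-trivial $k$-ary statement exactly when $G$ has a biclique with $|X| + |Y| = k$, which is precisely the reduction.

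To prove the equivalence I would argue in both directions. Note first that, on any coordinate set $Z$, the projection $r(Z)$ consists of the zero vector, the singleton indicators $\mathbf{1}_v$ for $v \in Z$, and the weight-two indicators $\mathbf{1}_{\{u,v\}}$ for edges with both endpoints in $Z$; no vector of weight three or more ever occurs, because every tuple has at most two $1$s. If $(X,Y)$ is a biclique, then $X$ and $Y$ carry no internal edges, so $r(X)$ and $r(Y)$ contain only the zero and the singleton vectors, and every product pair is witnessed: $(\mathbf{0}, \mathbf{0})$ by $\mathbf{0}$, the pairs $(\mathbf{1}_u, \mathbf{0})$ and $(\mathbf{0}, \mathbf{1}_v)$ by the $e$-tuples, and the cross pairs $(\mathbf{1}_u, \mathbf{1}_v)$ by $f_{uv}$, which exists because $X$ is completely joined to $Y$. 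Conversely, a missing cross edge $\{u,v\}$ with $u \in X$ and $v \in Y$ leaves $(\mathbf{1}_u, \mathbf{1}_v)$ unrealized, while an internal edge $\{u,u'\} \subseteq X$ produces the weight-two value $\mathbf{1}_{\{u,u'\}} \in r(X)$, which paired with any singleton $\mathbf{1}_v \in r(Y)$ would require a tuple of weight three---impossible; either way $X \boto Y$ fails.

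The main obstacle, and the point the construction is designed to overcome, is that the independence statement $X \boto Y$ is by definition insensitive to correlations \emph{within} $X$ or within $Y$: it only asserts that $r(XY)$ equals the full product $r(X) \times r(Y)$. A naive vertex-per-attribute encoding would therefore detect only complete bipartiteness between $X$ and $Y$ and ignore internal edges, which corresponds exactly to the \emph{polynomial-time} variant of biclique noted above and would not yield hardness. Capping every tuple at weight two is precisely what forces an internal edge to surface as an unrealizable weight-three combination on the $XY$-projection, so that internal independence is faithfully captured. Verifying this weight argument, together with the disjointness observation, is the crux of the proof.
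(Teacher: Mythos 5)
Your proposal is correct and takes essentially the same approach as the paper: the identical reduction from the node biclique problem with the same relation construction (one all-zero tuple, a singleton tuple per vertex, a weight-two tuple per edge), and the same key arguments---disjointness of $X$ and $Y$ forced by non-constant columns, missing cross edges leaving a product pair unwitnessed, and internal edges requiring an impossible weight-three tuple. Your write-up is merely more explicit than the paper's (which dismisses the biclique-to-IS direction as straightforward by construction), so there is nothing substantive to add.
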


\begin{proof}
The membership in $\NP$ is easy to verify: guess a $k$-ary IS and verify in polynomial time whether it is satisfied in the given relation. For $\NP$-hardness we reduce from the node biclique problem. Given a graph $G=(V,E)$ we define a relation $r$ with the node set $V$ as its relation schema. The relation $r$ is constructed as follows (see Fig. \ref{ex}):
\begin{itemize}
\item for each edge $(u,v)$ add a tuple that maps $u$ and $v$ to $1$ and every other attribute to $0$;
\item for each node $u$ add a tuple that maps $u$ to $1$ and every other attribute to $0$; and
\item add one tuple that maps all attributes to $0$.
\end{itemize}
It suffices to show that two disjoint node sets $V_1$ and $V_2$ induce a biclique in $G$ iff $V_1 \boto V_2$ is an independence statement of $r$.
The only-if direction is straightforward by the construction, so let us consider only the if direction. Assume that $r\models V_1\boto V_2$. First notice that $V_1$ and $V_2$ must be disjoint since no column in $r$ has a constant value. Let $v_1\in V_1$ and $v_2\in V_2$. Then we find $s_1$ ($s_2$, resp.) from $r$ mapping $v_1$  ($v_2$, resp.) to $1$. By assumption we find $s_3$ from $r$ mapping both $v_1$ and $v_2$ to $1$, which means that $v_1$ and $v_2$ are joined by an edge. Assume then to the contrary that two nodes $v_1,v_2$ from $ V_1$ are joined by an edge. Then we find $s_1$ from $r$ mapping both $v_1$ and $v_2$ to $1$. Since $V_2$ is non-empty, we find $s_2$ mapping some $v_3$ from $V_2$ to $1$. Then by assumption  there is a third tuple $s_3$ in $r$ mapping all $v_1,v_2,v_3$ to $1$. This contradicts the construction of $r$. By symmetry the same argument applies for $V_2$. Thus, we conclude that $V_1$ and $V_2$ induce a biclique.
 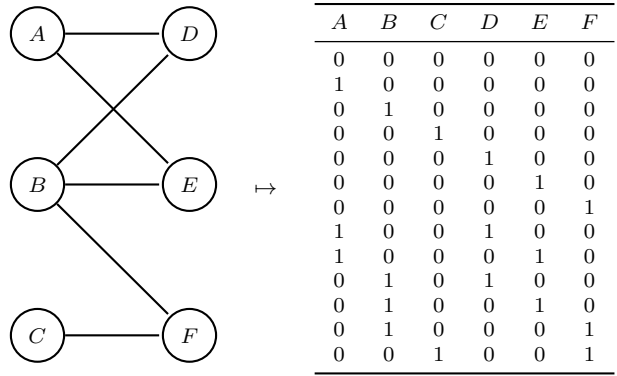
\begin{figure}
\begin{center}
\begin{tabular}{ccc}
\adjustbox{valign=m}{
\begin{tikzpicture}[->,>=stealth',shorten >=1pt,auto,node distance=2cm,
        thick,main node/.style={circle,draw,minimum size=.7cm,inner sep=0pt]}]

    \node[main node] (1) {$A$};
    \node[main node] (2) [right of=1]  {$D$};
    \node[main node] (3) [below of=1] {$B$};
        \node[main node] (4) [below of=2] {$E$};
        \node[main node] (5) [below of=3] {$C$};
        \node[main node] (6) [below of=4] {$F$};

    \path[-]
    (1) edge node {} (2)
        edge node {} (4)
    (3) edge node {} (2)
        edge node {} (4)
        edge node {} (6)
    (5) edge node {} (6);
\end{tikzpicture}
}
&
$\mapsto$
\vspace{0cm}
&
\begin{tabular}{cccccc}
\toprule
$A$&$B$&$C$&$D$&$E$&$F$\\\midrule
$0$&$0$&$0$&$0$&$0$&$0$\\
$1$&$0$&$0$&$0$&$0$&$0$\\
$0$&$1$&$0$&$0$&$0$&$0$\\
$0$&$0$&$1$&$0$&$0$&$0$\\
$0$&$0$&$0$&$1$&$0$&$0$\\
$0$&$0$&$0$&$0$&$1$&$0$\\
$0$&$0$&$0$&$0$&$0$&$1$\\
$1$&$0$&$0$&$1$&$0$&$0$\\
$1$&$0$&$0$&$0$&$1$&$0$\\
$0$&$1$&$0$&$1$&$0$&$0$\\
$0$&$1$&$0$&$0$&$1$&$0$\\
$0$&$1$&$0$&$0$&$0$&$1$\\
$0$&$0$&$1$&$0$&$0$&$1$\\\bottomrule
\end{tabular}
\end{tabular}
\end{center}\caption{Example reduction\label{ex}}
\end{figure}
\end{proof}

\subsection{Independence Discovery is $\W{3}$-complete}

Next we turn to the parameterized complexity of independence discovery. The parameterized complexity for the discovery of functional and inclusion dependencies was recently studied in \cite{Blasius0S16}. Parameterized on the arity, that is, the size of $X$ for both a functional dependency (FD) $X\to A$ and an inclusion dependency (IND) $r[X]\sub r'[Y]$, it was shown that the discovery problems are complete for the second and third levels of the $\Wsym$ hierarchy, respectively. 
The case for inclusion dependencies is particularly interesting as many natural fixed-parameter problems usually belong to either $\W{1}$ or $\W{2}$. 
 We show here that independence discovery is also $\W{3}$-complete in the arity of the independence statement. The \emph{arity} of the independence statement $X \boto Y$ is given as the size of the union $XY$.

A \emph{parameterized problem} is a language $L\sub \Sigma^* \times \N$, where $\Sigma$ is some finite alphabet.  The second component $k \in \N$ is called the \emph{parameter} of the problem.
The problem $L$ is called \emph{fixed-parameter tractable} (\FPT)  if $(x,k)\in L$ can be recognized in time $\mathcal{O}(f(k)p(|x|))$ where $p$ is a polynomial, and $f$ any computable function that depends only on $k$. Let $L$ and $L'$ be two parameterized  problems. An \emph{\FPT-reduction} from $L$ to $L$ is an $\FPT$ computable function that maps an instance $(x,k)$ of $L$ to an equivalent instance $(x',k')$ of $L'$ where the parameter $k'$ depends only on the parameter $k$. We write $L\leq_{\FPT} L'$ if there is an \FPT-reduction from $L$ to $L'$.

The relative hardness of a fixed-parameter intractable problem can be measured using the $W$ hierarchy. Instead of giving the standard definition via circuits, we here employ weighted satisfiability of propositional formulae \cite{DowneyF95,DowneyF95b}. A satisfying assignment of a propositional formula $\phi$ is said to have a \emph{hamming weight} $h$ if it sets exactly $h$  variables to true. A formula $\phi$ is $t$-normalized if it is a conjunction of disjunctions of conjunctions (etc.) of literals with $t-1$ alternations between conjunctions and disjunctions. For instance, CNF formulae are examples of $2$-normalized formulae. \emph{Weighted $t$-normalized satisfiability} is the problem to decide,  given a $t$-normalized formula $\phi$ and a parameter $k$, whether $\phi$  has a satisfying truth assignment with hamming weight $k$.
A parameterized problem $L$ is said to be in $\W{t}$ if $L \leq_{\FPT}  \W{t}$. These classes form a hierarchy $\FPT = \W{0} \sub \W{1} \sub \W{2} \sub \ldots $, and none of the inclusion are known to be strict.

Many natural fixed-parameter problems belong to the classes $\W{1}$ or $\W{2}$. 
 In what follows, we will show that independence discovery parameterized in the arity is complete for $\W{3}$.  To this end, it suffices relate to \emph{antimonotone} propositional formulae, that are, propositional formulae in which only negative literals may appear. \emph{Weighted antimonotone $3$-normalized satisfiability} (\problemFont{WA3NS}) is the problem to decide, given an antimonotone $3$-normalized formula $\phi$ and a parameter $k$, whether $\phi$ is a satisfying assignment with hamming weight $k$. By \cite{DowneyFellows1992b,DowneyF95,DowneyF95b} this problem remains $\W{3}$-complete in the parameter $k$. 

\begin{lemma}\label{inW3}
$\IAD$  is in \W{3}.
\end{lemma}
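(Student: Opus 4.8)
The plan is to exhibit an \FPT-reduction from \IAD{} (parameterized by arity $k$) to \problemFont{WA3NS}, thereby placing \IAD{} in \W{3}. Given a relation $r$ over schema $R$, I want to build an antimonotone $3$-normalized formula $\phi$ together with a weight parameter $k'$ depending only on $k$, so that $\phi$ has a satisfying assignment of hamming weight $k'$ if and only if $r$ satisfies some non-trivial $k$-ary IS. The natural encoding uses two families of propositional variables indexed by the attributes in $R$: a variable $x_A$ asserting ``$A\in X$'' and a variable $y_A$ asserting ``$A\in Y$'', for each $A\in R$. A weight-$k$ assignment then selects exactly $k$ of these variables, and I intend to read off from it the two sets $X=\{A:x_A\text{ true}\}$ and $Y=\{A:y_A\text{ true}\}$ of a candidate $k$-ary IS.

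First I would write down, in plain logical terms, what it means for a candidate $(X,Y)$ to be a \emph{bad} witness, i.e.\ to fail the IS condition. By the definition in Section~\ref{s:problem}, $X\boto Y$ fails exactly when there exist tuples $t_1,t_2\in r$ such that no $t\in r$ satisfies $t(X)=t_1(X)$ and $t(Y)=t_2(Y)$. I would then negate this: $X\boto Y$ \emph{holds} iff for every pair $(t_1,t_2)$ there is a witnessing $t$. The next step is to massage the existence of a witnessing tuple into a condition on which attributes are \emph{excluded} from $X$ and from $Y$, so that only negated variables survive. Concretely, a tuple $t$ witnesses the pair $(t_1,t_2)$ precisely when every attribute $A$ on which $t$ disagrees with $t_1$ is kept out of $X$ (so $\neg x_A$), and every attribute on which $t$ disagrees with $t_2$ is kept out of $Y$ (so $\neg y_A$). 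This is the crucial move: because the constraint is about attributes being \emph{absent} from $X$ or $Y$, it can be phrased using only negative literals, which is what antimonotonicity demands.

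Carrying this through, the formula takes the shape
\begin{equation*}
\phi \;=\; \bigwedge_{t_1,t_2\in r}\; \bigvee_{t\in r}\; \Bigl(\bigwedge_{A:\,t(A)\neq t_1(A)} \neg x_A \;\wedge\; \bigwedge_{A:\,t(A)\neq t_2(A)} \neg y_A\Bigr),
\end{equation*}
which is an AND of ORs of ANDs of \emph{negative} literals, hence antimonotone and $3$-normalized. I would also fold in the non-triviality and arity bookkeeping: I need to force $X$ and $Y$ both non-empty and the total arity to be $k$. Since a weight-$k$ assignment already fixes that exactly $k$ variables among the $x_A,y_A$ are true, the weight parameter $k'$ can be set to $k$ (or a small function of $k$), and I must argue that non-emptiness and the disjointness/overlap conventions are compatible with this encoding. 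The remaining verification is routine: show that a weight-$k$ satisfying assignment yields a valid non-trivial $k$-ary IS and conversely, and check that $\phi$ is constructed in \FPT{} time with $k'=f(k)$ depending only on $k$.

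The main obstacle I anticipate is \emph{maintaining antimonotonicity while encoding the positive requirements}, namely that $X$ and $Y$ are non-empty and that exactly $k$ attributes are used. Non-emptiness and ``exactly $k$'' are naturally positive constraints, and expressing them directly would introduce positive literals and break the antimonotone form. The trick I would rely on is to offload these positive conditions entirely onto the weight parameter of the weighted-satisfiability problem, so that the formula $\phi$ itself only encodes the (negatively phrased) failure-avoidance condition. A delicate point here is ensuring that a minimum-weight or exact-weight satisfying assignment cannot ``cheat'' by, say, putting the same attribute into both $X$ and $Y$ or by leaving one side empty; I would handle this by an appropriate choice of $k'$ and, if necessary, by a light preprocessing of $r$ (as in the \NP-hardness reduction, where constant columns are ruled out) so that the correspondence between weight-$k$ assignments and genuine non-trivial $k$-ary ISs is exact.
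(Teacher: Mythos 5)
Your core construction coincides with the paper's: the paper likewise reduces to weighted antimonotone $3$-normalized satisfiability (\problemFont{WA3NS}), and its formula $\phi_1$ is exactly your conjunction over pairs of tuples, of disjunctions over witness tuples, of conjunctions of negative ``forbidden-attribute'' literals. The genuine gap is in how a candidate IS is encoded by an assignment. With your two variables $x_A,y_A$ per attribute, a weight-$k'$ assignment corresponds to a pair $(X,Y)$ with $|X|+|Y|=k'$, whereas the parameter of $\IAD$ is the arity $|X\cup Y|$. In this paper the two sides of an IS need not be disjoint, and overlapping ISs are not a pathology one can ignore: $A\boto A$ holds exactly when $A$ is a constant column, and any overlap $X\cap Y$ of a satisfied IS consists of constant columns. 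Consequently weight and arity come apart in both directions: an arity-$k$ IS with nonempty overlap induces only assignments of weight $>k$ (so completeness needs a separate disjointification argument, which can be made to work for $k\geq 2$ but fails outright for $k=1$, where $A\boto A$ is a non-trivial $1$-ary IS that no weight-$1$ assignment in your encoding can represent), while a weight-$k$ assignment with overlap yields only an IS of arity $<k$. The paper's solution --- the idea your proposal is missing --- is to use three variables $x_{i,L},x_{i,C},x_{i,R}$ per attribute, encoding membership in $X\setminus Y$, $X\cap Y$ and $Y\setminus X$, together with an extra antimonotone CNF $\phi_0$ (clauses $\neg x_{i,P}\vee\neg x_{i,P'}$ for $P\neq P'$) forcing each attribute into at most one role; then hamming weight equals arity exactly, overlap and the $k=1$ case included, and $\phi_0\wedge\phi_1$ remains antimonotone and $3$-normalized.

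Your proposed handling of the remaining positive constraints is also a step that fails as stated. Offloading ``both sides nonempty'' onto the weight parameter cannot work: the weight only fixes how many variables are true, not how they are distributed over the $x$- and $y$-families, and the assignment that puts all $k'$ true variables on the $x$-side satisfies your $\phi$ for every relation $r$ (for each pair $(t_1,t_2)$ take $t=t_1$; the $y$-conjunct is vacuously true since all $y_A$ are false). So the reduction, exactly as written, maps every instance to a yes-instance. In fairness, the paper's own write-up is equally silent about excluding such one-sided assignments (its $\phi_0\wedge\phi_1$ is also satisfied by assignments using only $L$-variables), so this second issue is one you share with the paper rather than one the paper resolves; but the three-role encoding that repairs the weight/arity correspondence is essential to the paper's argument, and it is neither present in, nor replaceable by, your ``choice of $k'$ plus light preprocessing'' fallback.
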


\begin{proof}
Starting from a relation $r$ over a schema $R$ we build an antimonotone $3$-normalized formula $\phi$ such that $r$ satisfies a $k$-ary independence statement iff $\phi$ has a weight $k$ satisfying assignment. For this we introduce propositional variables $x_{i,P}$ where $i$ ranges over  $1, \ldots ,|R|$ and $P$  ranges over $L,C,R$.  A $k$-ary independence statement 
\[A_1 \ldots A_{l+m} \boto A_l \ldots A_{l+m+n}\] is now represented by setting $x_{h,L}$, $x_{i,C}$, $x_{j,R}$ to true for $h=1, \ldots ,l-1$, $i=l, \ldots ,l+m$, and $j=l+m+1, \ldots ,l+m+n$.  We then define
\[\phi_0= \bigwedge_{1 \leq i \leq |R|} (\neg x_{i,L} \vee \neg x_{i,C}) \wedge (\neg x_{i,L} \vee \neg x_{i,R})  \wedge (\neg x_{i,C} \vee \neg x_{i,R}).\]
Notice that any weight $k$ satisfying assignment of $\phi_0$ identifies a partition of a $k$-ary independence statement $X\boto Y$ to sets $X\setminus Y$, $X\cap Y$,  $Y\setminus X$.
The satisfaction of this atom in $r$ is now encoded by introducing a fresh antimonotone $3$-normalized constraint. Let be $t_0,t_1,t_2$ be tuples from $r$. We say that a variable $x_{i,P}$ is \emph{forbidden} for $(t_0,t_1,t_2)$ if
\begin{itemize}
\item $t_0(A_i)\neq t_2(A_i)$ and $P=L$,
\item $t_1(A_i)\neq t_2(A_i)$ and $P=R$,
\item $t_0(A_i)\neq t_1(A_i)$ and $P=C$.
 \end{itemize}
We then let $F_{h,i,j}$ be the set of all forbidden variables for $(t_h,t_i,t_j)$, and set
\[\phi_{h,i,j}=\bigwedge_{x\in F_{h,i,j}} \neg x.\]
The independence statement $A_1 \ldots A_{l+m} \boto A_l \ldots A_{l+m+n}$ is satisfied by $r$ iff for all $t_h,t_i$ from $r$ there is $t_j$ from $r$ such that
\[x_{1,L},\ldots ,x_{l-1,L},x_{l,C},\ldots ,x_{l+m,C} ,x_{l+m+1,R},\ldots ,x_{l+m+n,R}\]
 are not forbidden for $(t_h,t_i,t_j)$. Therefore, letting
\[\phi_1 = \bigwedge_{1 \leq h < i \leq |R|} \bigvee_{1\leq j\leq |R|} \phi_{h,i,j}\]
we obtain that $r$ satisfies a $k$-ary independence statement iff $\phi_0\wedge \phi_1$ has a weight $k$ satisfying assignment. Since this formula is  $3$-normalized and antimonotone, and polynomial-time constructible from $r$, the claim of the lemma follows.
\end{proof}

Next we show that $\IAD$ is hard for $\W{3}$ by reducing from \problemFont{WA3NS}. To this end, it suffices to relate to a simplified version of the independence discovery problem. For a relation schema $R$ we call $R\cup\{i\}$  where $i$ is a special index column an \emph{index schema}. A relation over $R\cup\{i\}$ is called an \emph{index relation}.
The \emph{indexed independence statement discovery} ($\IIAD$) is given as an index relation $r$ over $R\cup\{i\}$ and a parameter $k$, and the problem is to decide whether $r$ satisfies an IS $X\boto i$ where $X$ is of size $k$.  This variant can be reduced to the general problem as follows. Given an index relation $r$ over $\{A_1, \ldots ,A_n,i\}$, and given two values $0$ and $1$ not appearing anywhere in $r$, increment $r$ with tuples $(0,\ldots ,0,a)$ and $(1,\ldots ,1,a)$ for all values $a$ appearing in the index column $i$. Then the incremented relation satisfies some $(k+1)$-ary independence statement iff the initial relation satisfies $X\boto i$ where $X$ is of size $k$.

\begin{lemma}\label{W3hard}
$\IIAD \leq_{\FPT} \IAD$.
\end{lemma}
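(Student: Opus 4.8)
The plan is to verify that the transformation described immediately before the lemma is indeed an \FPT-reduction. Write $r'$ for the relation obtained from the index relation $r$ over $\{A_1,\ldots,A_n,i\}$ by adding, for each value $a$ occurring in column $i$, the two tuples $(0,\ldots,0,a)$ and $(1,\ldots,1,a)$ built from two fresh values $0,1$. I would first record the routine facts that $r'$ is constructible in polynomial time (it adds at most $2|r|$ tuples) and that the target parameter $k+1$ depends only on the source parameter $k$, so the map is $\FPT$-computable. It then remains to prove the equivalence: $r'$ satisfies a non-trivial $(k+1)$-ary IS if and only if $r$ satisfies $X\boto i$ for some $X$ with $|X|=k$.

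For the easy direction I would take $X\boto i$ with $|X|=k$ holding on $r$ and show it still holds on $r'$; since $i\notin X$ this is a non-trivial $(k+1)$-ary IS. Given $t_1,t_2\in r'$, set $a=t_2(i)$; because every index value occurring in $r'$ already occurs in $r$, the value $a$ occurs in column $i$ of $r$. If $t_1\in r$, then $r\models X\boto i$ supplies a witness $t\in r\subseteq r'$; if instead $t_1$ is one of the added tuples, then $t_1(X)$ is constantly $0$ or constantly $1$, and the matching added tuple $(0,\ldots,0,a)$ or $(1,\ldots,1,a)$ is the required witness. Hence $r'\models X\boto i$.

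The hard direction, and the main obstacle, is to show that every non-trivial $(k+1)$-ary IS of $r'$ forces the index column onto one side and then restricts back to $r$. Suppose $r'\models U\boto V$ with $U,V$ non-empty and $|U\cup V|=k+1$. The key blocking observation is that $r'$ contains no tuple carrying value $0$ in one $A$-column and $1$ in another: the added tuples are constant across all $A$-columns, while the original tuples avoid $0,1$ altogether. Feeding the all-$0$ and all-$1$ added tuples into $U\boto V$ therefore rules out $U$ and $V$ both containing an $A$-column, so one side, say $V$ (using that an IS is symmetric), must equal $\{i\}$. Assuming column $i$ is non-constant—the degenerate constant case makes both problems trivially positive and would be treated separately—the overlap on column $i$ would force $i$ to be constant, so $i\notin U$ and thus $U\subseteq\{A_1,\ldots,A_n\}$ with $|U|=k$. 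Finally, to transfer $U\boto i$ from $r'$ back to $r$, I would invoke freshness once more: for $t_1,t_2\in r$ a witness $t\in r'$ satisfies $t(U)=t_1(U)$, and since $t_1(U)$ is neither all-$0$ nor all-$1$ the witness cannot be an added tuple, whence $t\in r$. This yields $r\models U\boto i$ with $|U|=k$, completing the equivalence; monotonicity of $k$-ary satisfiability (a larger satisfied IS can be shrunk while keeping one attribute on each side) ensures the argument is unaffected by the choice of witnessing IS.
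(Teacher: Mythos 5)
Your proof is correct, but it is not the argument the paper gives for this lemma: you flesh out the lightweight ``incrementation'' reduction that the paper only sketches in the paragraph \emph{before} the lemma (add, for each index value $a$, the tuples $(0,\ldots,0,a)$ and $(1,\ldots,1,a)$ over fresh values $0,1$; parameter $k\mapsto k+1$), whereas the paper's own proof builds a different instance over an \emph{extended} schema $R'=R\cup I$ with $I=\{i_A\mid A\in R\}$: it copies the index column into every $i_A$, adjoins two constant tuples $t_0,t_1$ (all-$0$ and all-$1$), and completes the relation with a partial Cartesian product $r'=r_0\cup\bigl((r_1[R]\times r_1[I])\setminus(r_0[R]\times r_0[I])\bigr)$, claiming that $r$ satisfies some $X\boto i$ with $|X|=n$ iff $r'$ satisfies an $n$-ary IS. Both routes hinge on the same blocking mechanism --- the fresh constants $0,1$ prevent any non-trivial IS of the constructed relation from mixing the two column groups --- but the arity bookkeeping differs, and yours is the more robust. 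In the paper's construction, an $n$-ary IS $X\boto Y$ of $r'$ with $Y\sub I$ only returns $X\boto i$ in $r$ with $|X|=n-|Y|\le n-1$, so the proof's assertion that $|X|\ge n$ does not hold; to make the equivalence exact one would have to take the target parameter to be $n+|R|$, which depends on the instance rather than on the parameter alone and thus conflicts with the paper's own definition of an FPT-reduction. Your construction keeps the schema fixed, shifts the parameter by exactly one, and recovers a set $U$ of size exactly $k$, so it avoids this defect entirely. Two small points to tighten: when the index column is constant, both problems are positive only if $k\le|R|$ (for $k>|R|$ both are negative, so the instances remain equivalent), and you implicitly need $k\ge 1$ so that the IS $X\boto i$ produced in the easy direction is non-trivial; neither affects the substance of your argument.
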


\begin{proof}
Let $r$ over $R\cup\{i\}$ be an instance of \problemFont{Index IS discovery}. We construct from $r$  an equivalent instance $r'$ of \problemFont{IA discovery}, defined over a relation schema $R'= R\cup I$ where $I= \{i_A\mid A\in R\}$. To this end, we first let $r_0$ be the relation obtained from $r$ by replacing each tuple $t$ with a tuple $t'$ that extends $t[R]$ with $i_A\mapsto t(i)$ for $A\in R$. Then  let $t_0$ and $t_1$ be constant tuples over $R'$ that map all attributes to $0$ and $1$, respectively,  where $0$ and $1$ are two distinct values not appearing in $r$. We define $r_1=r_0\cup \{t_0,t_1\}$ furthermore $r'= r_0 \cup (  (r_1[R] \times r_1[I])\setminus (r_0[R]\times r_0[I]))$. We claim that $r$ satisfies an independence statement $X\boto i$ for $|X|=n$ iff $r'$ satisfies an $n$-ary independence statement.

For the only-if direction it suffices to note that, given an IS $X\boto i$ in $r$ where $|X|=n$, then $X\boto I$ is an IS in $r'$ by construction. For the if direction assume that $X\boto Y$ is an $n$-ary IS in $r'$. By construction $X\sub R$ and $Y\sub I$ (or vice versa), and hence $X\boto i$ is an IS in $r$. Since $|X|\geq n$ the claim follows.
\end{proof}

It suffices to show that \problemFont{WA3NS} reduces to $\IIAD$. To this end, we first prove the following lemma that will be applied to the outermost conjunctions of $3$-normalized  formulae.  Notice that
each propositional formula $\phi$ over $n$ variables gives rise to a truth assignment $f_{\phi}\colon\{0,1\}^n\to \{0,1\}$ defined in the obvious way. Similarly, following \cite{Blasius0S16} we define for each index relation $r$ over $R\cup\{i\}$ an \emph{indicator function} $f_r\colon\{0,1\}^{|R|}\to \{0,1\}$ so that it maps the characteristic function of  each $X\sub R$ to $1$ iff $r$ satisfies $X\boto i$. Lemma \ref{help1} is now analogous to its inclusion dependency discovery counterpart in \cite{Blasius0S16}.

\begin{lemma}\label{help1}
Let  $r_1, \ldots ,r_n$ be index relations over a shared index schema $R\cup\{i\}$. Then there is a relation $r$ over $R\cup\{i\}$ with size cubic in $|\bigcup_{j=1}^n r_j|$ and such that  $f_r= \bigwedge_{j=1}^n f_{r_j}$.
\end{lemma}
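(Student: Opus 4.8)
The plan is to realize the conjunction by a single \emph{disjoint union} of suitably relabelled copies of the $r_j$, following the analogous inclusion-dependency construction of \cite{Blasius0S16} but with an extra step tailored to independence. The governing observation is that $r$ fails $X\boto i$ exactly when some $X$-value and some index value of $r$ never co-occur. I therefore want to arrange the combined relation so that (i) every witness for an IS $X\boto i$ must come from within a single component, so that a violation in one $r_j$ can never be repaired by tuples of another, and (ii) all components range over the \emph{same} set of index values, so that a value missing from one component's index column cannot masquerade as an independence violation.

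For (i) I would relabel the values of the non-index attributes so that distinct components use pairwise disjoint value spaces, while leaving the index column untouched; call the result $r_j'$. After this relabelling an $X$-value of $r_j'$ occurs only in tuples of $r_j'$, so within any union the co-occurrence of that $X$-value with an index value is decided entirely inside the $j$-th component.

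The heart of the argument is step (ii), the \emph{index completion}, which I expect to be the main obstacle, since the two requirements pull in opposite directions: (i) forbids sharing non-index values whereas (ii) forces sharing index values. Let $C=\bigcup_j r_j(i)$ collect all index values. For each component I extend its index column to all of $C$ without altering which ISs it satisfies, by setting
\[ r_j^{+}=r_j'\cup\{\, t : t(R)=a(R),\ t(i)=c,\ a\in r_j',\ c\in C\setminus r_j'(i)\,\}. \]
I would then check that each new value $c$ is \emph{free} in $r_j^{+}$: it co-occurs with every $X$-value of $r_j'$ for every $X\sub R$, because it was paired with the $R$-part of every existing tuple, so it can never cause a violation; meanwhile the original index values gain no new co-occurrences, as the added tuples reuse only existing $R$-values and carry the fresh index value $c$. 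Hence $r_j^{+}\models X\boto i$ iff $r_j\models X\boto i$ for every $X\sub R$, and $r_j^{+}(i)=C$.

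Finally I set $r=\bigcup_j r_j^{+}$. Since the $R$-values are disjoint across components while the index column is the common set $C$, I have $r(X)=\biguplus_j r_j^{+}(X)$ and $r(i)=C$, so $r\models X\boto i$ iff every $X$-value, lying in a unique component $j$, co-occurs with every $c\in C$ inside that component, i.e.\ iff $r_j^{+}\models X\boto i$ for all $j$. By the completion this is equivalent to $r_j\models X\boto i$ for all $j$, which is exactly $f_r=\bigwedge_{j=1}^n f_{r_j}$. A routine count then bounds $|r|$: the relabelled union contributes the original tuples, and the completion adds at most $|r_j'|\cdot|C|$ tuples per component, giving size at most cubic in $|\bigcup_{j=1}^n r_j|$, as claimed.
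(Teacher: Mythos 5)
Your proposal is correct and is essentially the paper's own construction: the paper assumes all components have pairwise disjoint values and indices and defines $r=\bigcup_{j=1}^n r_j\cup \bigcup_{j=1}^n \bigl(r_j[R]\times \bigcup_{k\neq j} r_k[i]\bigr)$, and since the index values are disjoint there, $\bigcup_{k\neq j} r_k[i]$ is exactly your $C\setminus r_j(i)$, so your ``index completion'' $r=\bigcup_j r_j^{+}$ coincides with it up to the cosmetic choice of relabelling index values. Your write-up just makes explicit the verification (witnesses confined to one component, foreign index values made harmless, size bound) that the paper leaves implicit.
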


\begin{proof}
We may assume that distinct $r_i$ and $r_j$ do not have any values or indices in common. It suffices to define
\[r=\bigcup_{j=1}^n r_j\cup \bigcup_{j=1}^n  (r_j[R]\times \bigcup_{\substack{k=1\\k\neq j}}^n r_k[i]) .\]
\end{proof}

Given the previous lemma, it remains to construct a reduction from antimonotone DNF formulae to equivalent index relations. The proof of the lemma follows that of an analogous lemma in \cite{Blasius0S16}.
\begin{lemma}\label{help2}
$\phi$ be a antimonotone formula in DNF. Then there is an index relation $r$ with size cubic in $|\phi|$  and such that $f_{\phi}=f_r$.
\end{lemma}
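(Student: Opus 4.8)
The plan is to realize an antimonotone DNF directly as an index relation with a two-valued index column, where the only genuine subtlety — the disjunction — is handled by a duplication trick. Write $\phi=\bigvee_{j=1}^{m}\bigwedge_{A\in S_j}\neg x_A$ with each $S_j\sub R$. Reading a subset $X\sub R$ as the assignment that sets $x_A$ true exactly for $A\in X$, the formula evaluates to true iff some clause is satisfied, i.e.\ iff $X\cap S_j=\emptyset$ for some $j$, equivalently $X\sub R\setminus S_j$. Thus the goal is an index relation $r$ with $r\models X\boto i$ iff $X\sub R\setminus S_j$ for some $j$, that is, $f_r=f_\phi$.

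First I would record the single-clause case as a warm-up. For one clause $S$, take two tuples that agree on $R\setminus S$ and differ on every attribute of $S$, placed under two distinct index values; then $X\boto i$ holds iff the two tuples agree on $X$, i.e.\ iff $X\sub R\setminus S$. This already suggests that the index column should carry exactly two values $0,1$, so that, writing $r_0,r_1$ for the index-$0$ and index-$1$ parts, $r\models X\boto i$ is equivalent to $r_0(X)=r_1(X)$.

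The hard part is the disjunction: merely taking the union of single-clause gadgets forces $r_0(X)=r_1(X)$ to match on every clause at once, which yields the conjunction $\bigwedge_j$ of the clause conditions rather than their disjunction. To obtain a disjunction I introduce two probe tuples $p,q$ with pairwise distinct entries $p(A)=a_A\neq b_A=q(A)$, placing $p$ at index $0$ and $q$ at index $1$. For each clause $j$ I add two cover tuples: $c_j$ that agrees with $p$ on $R\setminus S_j$ and carries fresh, globally distinct values on $S_j$, and $c_j'$ that agrees with $q$ on $R\setminus S_j$ and is fresh on $S_j$. Crucially, each cover tuple is inserted under \emph{both} index values $0$ and $1$; this is exactly what turns the naturally conjunctive behaviour of projection equality into the required disjunction, and I expect it to be the one non-obvious ingredient of the proof.

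For correctness I would argue directly from the co-occurrence characterisation of $X\boto i$, namely that every $X$-value present in $r$ must co-occur with every index value. Since each cover tuple occurs under both indices, its $X$-value co-occurs with both $0$ and $1$, so cover tuples never create a violation; hence the only values that can fail to co-occur with some index are $p(X)$ and $q(X)$. Now $p(X)$ occurs under index $0$, and it occurs under index $1$ iff some index-$1$ tuple shares its projection; as $q$ and the $c_k'$ use values disjoint from those of $p$ (so never match $p$ on a nonempty $X$), this happens iff some cover tuple $c_j$ matches, which by the fresh values on $S_j$ is exactly the case $X\sub R\setminus S_j$. Thus $p(X)$ co-occurs with both indices iff $X\sub R\setminus S_j$ for some $j$, and by symmetry the same condition governs $q(X)$. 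Therefore $r\models X\boto i$ iff $X\sub R\setminus S_j$ for some $j$, i.e.\ $f_r=f_\phi$. Finally I would bound the size: the relation has $2+4m$ tuples over $|R|+1$ columns, which is $O(m\,|R|)$ and hence comfortably within the claimed cubic bound in $|\phi|$.
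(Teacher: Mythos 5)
Your proof is correct, and your gadget is genuinely leaner than the paper's. The paper also realizes $\phi$ directly as an index relation, but it uses one index value \emph{per clause}: a relation $r_0$ with one probe tuple $t_j$ per clause $\psi_j$ (variables of $\psi_j$ set to $j$, all others to $0$, index value $j$), together with a full Cartesian product $r_1 = r_X \times r_i$, where $r_X$ consists of $m$ copies of $r_0[X]$ in which the $l$-th copy masks the variables of $\psi_l$ by a fresh value ``$-$'', and $r_i=\{1,\ldots,m\}$; this yields $\Theta(m^3)$ tuples, which is exactly where the cubic bound in the statement comes from. The core mechanism is the same as yours: tuples paired with every index value can never witness a violation, and a probe's $X$-projection is matched under the remaining index values precisely when $X$ avoids the variable set of some clause, because the matching tuples agree with the probe exactly off that clause's variables and are fresh on them. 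Your construction streamlines this by decoupling the probes from the clauses: two probes $p,q$ and two index values suffice, and duplicating the $2m$ cover tuples under both indices replaces the paper's product $r_X\times r_i$, cutting the size from $\Theta(m^3)$ tuples to $2+4m$ tuples ($O(m|R|)$ total) -- comfortably within, indeed well below, the claimed cubic bound. This smaller gadget still composes with the outer-conjunction lemma (Lemma~\ref{help1}), whose argument is insensitive to how many index values each constituent relation uses, provided the relations are made value- and index-disjoint as that lemma assumes. The one hypothesis your argument genuinely leans on, and which you do state, is global distinctness: $p$, $q$, and all fresh entries must be pairwise distinct everywhere, so that the only index-$1$ tuples that can match $p$ on a nonempty $X$ are the $c_j$ with $X\sub R\setminus S_j$, and symmetrically for $q$.
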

\begin{proof}
 An antimonotone DNF formula $\phi$ is of the form
\[\bigvee^m_{j=1}  \psi_j \]
where each $\psi_j$ is a conjunction of negative literals. Assuming that the variables of $\phi$ are $X= \{x_1, \ldots ,x_n\}$, we define an index relation over $R=X\cup\{i\}$. This relation $r$ is defined as the union of two relations $r_0$ and $r_1$.

First, we define $r_0:=\{t_1, \ldots ,t_m\}$ where $t_j$ maps everything to $j$, except that variables $x_j$ that do not appear in $\psi_j$ are mapped to $0$. For instance, $\psi_1$ of the form $(\neg x_1\wedge \neg x_3\wedge \neg x_5)$ gives rise to a tuple $(1,0,1,0,1;1)$ where the last number denotes the index value (see Fig. \ref{ex1}).

Second,  we define $r_1 := r_X\times r_i$ where $r_X$ and $r_i$ have schemata $X$ and $\{i\}$. Let ``$-$'' be a value that does not appear in $r_0$. The relation $r_X$ is  given by generating $m$ copies of $r_0[X]$. In the $l$th copy we set each value of $x_j$ to this new value ``$-$'' whenever  $x_j$ appears in $\psi_l$. Then $r_X$ is obtained by taking the union of all these copies. Finally, we define $r_i$ over $i$ as $\{1, \ldots ,m\}$.

\begin{figure}[ht]\label{fig:example for extension}
\begin{center}
\begin{tabular}{cl}
$\phi$&$=(c_1\wedge c_2\wedge c_3)$\\
$c_1$&$=(\neg x_1\wedge \neg x_3\wedge \neg x_5)$\\
$c_2$&$=(\neg x_2\wedge \neg x_4\wedge \neg x_5)$\\
$c_3$&$=(\neg x_3\wedge \neg x_4)$
\end{tabular}
\qquad
\begin{tabular}{ccccc|clr}
			     $x_1$ &      $x_2$& $x_3$&$x_4$&$x_5$&$i$      \\\cline{1-6}
	  1 & 	0 & 1 & 0 & 1 	& 1 &\rdelim\}{3}{3mm}[$r_0$] 	 \\
	  0& 	2& 0 & 2 & 2 	 & 2	 \\
		  0& 	0 & 3 & 3 & 0 & 3		 \\\cline{1-6}

	  -& 	0 & - & 0 & - 	& $\{1,2,3\}$&	\rdelim\}{9}{3mm}[$r_1$]  \\
	  -& 	2& - & 2 & - 	 & $\{1,2,3\}$	 \\
		  -& 	0 & - & 3 & - & $\{1,2,3\}$		 \\

	  1 & 	- & 1 & - & - 	& $\{1,2,3\}$	 \\
	  0& 	-& 0 & - & - 	 & $\{1,2,3\}$	 \\
		  0& 	- & 3 & - & -& $\{1,2,3\}$		 \\

	  1 & 	0 & - & - & 1 	& $\{1,2,3\}$	 \\
	  0& 	2& -& - & 2 	 & $\{1,2,3\}$	 \\
		  0& 	0 & - & - & 0 & $\{1,2,3\}$		 \\
\end{tabular}

\end{center}
\caption{An example of a DNF formula $\phi$ and a relation $r=r_0\cup r_1$\label{ex1}}
\end{figure}

We claim that $f_{\phi}(Y)=f_r(Y)$ for any binary sequence $Y$ of length $|X|$. Assume first that $f_{\phi}(Y)=1$.  Then $f_{\phi}$ satisfies some $\psi_i$ and consequently no variable from $Y$ occurs in $\psi_i$. Hence, in the $i$th copy of $r_0[X]$ no variable in $Y$ is set to ``$-$'', and therefore $r_0[Y]\times r[i]\sub r[Yi]$. Since all tuples in $r_1[X]$ are joined with all index values of $i$, we also have $r_1[Y]\times r[i]\sub r[Yi]$. Hence, $Y\boto i$ is true in $r$ and $f_r(Y)=1$.

Assume then that $f_{\phi}(Y)=0$.  Then $f_{\phi}$ falsifies all $\psi_i$ implying that some variable from $Y$ occurs in every $\psi_i$. Therefore, every copy of $r_0[X]$ sets some variable in $Y$ to ``$-$''. This means that the values of $Y$ in $r_0[X]$ are not joined by all index values, thus rendering $Y\boto i$ false. Finally, since the size of $r$ is cubic in the number of clauses, the claim follows.
\end{proof}

That $\IAD$ is $\W{3}$-complete follows now from Lemmata \ref{inW3}-\ref{help2}.

\begin{theorem}
$\IAD$ is $\W{3}$-complete.
\end{theorem}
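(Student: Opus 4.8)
The plan is to prove $\W{3}$-completeness in the two standard directions, membership and hardness, and to observe that almost all of the technical work is already packaged in the preceding lemmata, so that the theorem itself is an assembly argument. Membership in $\W{3}$ is immediate from Lemma~\ref{inW3}, so the task reduces to establishing $\W{3}$-hardness. Since \problemFont{WA3NS} is $\W{3}$-complete, it suffices to exhibit an \FPT-reduction $\problemFont{WA3NS}\leq_{\FPT}\IAD$, which I would obtain by composing two reductions.

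The second link, $\IIAD\leq_{\FPT}\IAD$, is exactly Lemma~\ref{W3hard}, and it transports the parameter faithfully: an $\IIAD$ instance asking for $X\boto i$ with $|X|=n$ becomes an $\IAD$ instance asking for an $n$-ary IS. So the only genuinely new step is the first link, $\problemFont{WA3NS}\leq_{\FPT}\IIAD$. For this I would exploit the normal form of the input. An antimonotone $3$-normalized formula is a conjunction of disjunctions of conjunctions of negative literals, i.e. $\phi=\bigwedge_{a} D_a$ where each $D_a$ is an antimonotone DNF formula. Applying Lemma~\ref{help2} to each $D_a$ yields an index relation $r_a$ over a shared index schema with $f_{r_a}=f_{D_a}$; applying Lemma~\ref{help1} to the family $(r_a)_a$ then yields a single index relation $r$ with $f_r=\bigwedge_a f_{r_a}=\bigwedge_a f_{D_a}=f_\phi$. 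By the definition of the indicator function, $r\models Y\boto i$ for a set $Y$ of size $k$ if and only if $f_r(Y)=1$, that is, if and only if the weight-$k$ assignment with support $Y$ satisfies $\phi$. Hence $(\phi,k)$ is a yes-instance of \problemFont{WA3NS} exactly when $(r,k)$ is a yes-instance of $\IIAD$, with the parameter carried over unchanged. Composing with Lemma~\ref{W3hard} gives $\problemFont{WA3NS}\leq_{\FPT}\IAD$, completing the hardness direction and hence, together with Lemma~\ref{inW3}, the proof.

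The main obstacle is not any single construction but the bookkeeping that turns the composition into a genuine \FPT-reduction. Two conditions must be checked. First, the parameter has to be transported faithfully along the whole chain $\problemFont{WA3NS}\to\IIAD\to\IAD$: each link maps the weight/size/arity $k$ to a value depending only on $k$ (here, to $k$ itself), so the composite does as well. Second, the construction must run in polynomial time, so that it is \FPT-computable; Lemma~\ref{help2} incurs a cubic blow-up per disjunct $D_a$ and Lemma~\ref{help1} a further cubic blow-up when combining the $r_a$, but both are polynomial in $|\phi|$, and Lemma~\ref{W3hard} is likewise polynomial. The one point that genuinely requires care is aligning the Boolean structure of $\phi$ with the two lemmata — matching the outer conjunction to Lemma~\ref{help1} and each inner antimonotone DNF to Lemma~\ref{help2} — since this alignment is precisely what guarantees $f_r=f_\phi$ rather than some other Boolean combination of the $f_{D_a}$.
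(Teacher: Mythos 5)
Your proposal is correct and follows exactly the paper's intended argument: the paper's own proof of the theorem is a one-line assembly of Lemmata~\ref{inW3}--\ref{help2}, and you have simply spelled out that assembly --- membership from Lemma~\ref{inW3}, and hardness by converting each outer conjunct of the antimonotone $3$-normalized formula via Lemma~\ref{help2}, combining the resulting index relations via Lemma~\ref{help1} to obtain $\problemFont{WA3NS}\leq_{\FPT}\IIAD$, and composing with Lemma~\ref{W3hard}. Your additional bookkeeping (parameter preservation and polynomial-time computability along the chain) is exactly what the paper leaves implicit.
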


Combining findings from this paper and \cite{Blasius0S16} we now know that the discovery problem for both ISs and INDs is $\W{3}$-complete, while for FDs  it is $\W{2}$-complete. Observe that these dependencies are either universal-existential (ISs and INDs) or universal  (FDs) first-order formulae. Reducing to propositional formulae this quantifier alternation translates to connective alternation at the outermost level, while the quantifier-free part is transformed at the innermost level to a conjunction of negative literals for ISs and INDs, or  a dual Horn formula (i,e., a disjunction with at most one negative literal) for FDs. Note that these data dependencies, among most others, generalize to either \emph{tuple-generating dependencies} (tgds) (i.e., first-order formulae of the form $\forall \tuple x(\phi(\tuple x) \rightarrow \exists \tuple y \psi(\tuple x,\tuple y))$ where $\phi$ and $\psi$ are conjunctions of relational atoms) or \emph{equality-generating dependencies} (egds) (i.e., first-order formulae of the form $\forall \tuple x(\phi(\tuple x) \rightarrow \psi(\tuple x))$ where $\phi$ is a conjunction of relational atoms and $\psi$ is an equality atom). It seems safe to conjecture that the reductions outlined above can be tailored to most subclasses of tgds (egds, resp.). Thus, likely no natural data dependency class with a reasonable notion of arity taken as the parameter has fixed-parameter complexity beyond $\W{3}$.

\section{Discovery algorithm}\label{s:alg}

Despite the computational barriers on generally efficient solutions to the independence discovery problem, we will now establish an algorithm that works efficiently on real-world data sets that exhibit ISs of modest arity. The algorithm uses level-wise candidate generation and, based on the downward-closure property of ISs, it terminates whenever we can find an arity on which the given data set does not exhibit any IS. The algorithm works therefore well within the computational bounds we have established: as the problem is \W{3}-complete in the arity, the discovery of any ISs with larger arities requires an exponential blow-up in discovery time.

Henceforth, by the symmetry of the independence statement we identify each IS of the form $X\boto Y$ with the set $\{X,Y\}$.
As mentioned above, the arity of the IS $\{X,Y\}$ is given as $|XY|$
  The algorithm generates candidate ISs with increasing arity and tests whether each candidate ISs is valid. This algorithm is worst-case exponential in the number of columns. The relation from Table~\ref{t:ex} will be used to illustrate the algorithm.

\subsection{Disregard constant columns}

As our experiments illustrate later, the inclusion of columns that feature only one value (constant columns) renders the discovery process inefficient. A pre-step for our discovery algorithm is therefore the removal of all constant columns from the input data set.

\subsection{Generating candidate ISs}

The algorithm begins by generating all the possible ISs of arity two. Indeed, unary ISs need not be considered as they are satisfied trivially. On our running example the candidate ISs are \begin{center}\{\{a\},\{e\}\}, \{\{a\},\{re\}\}, \{\{a\},\{s\}\}, \{\{e\}, \{re\}\}, \{\{e\},\{s\}\}$, and $\{\{re\},\{s\}\}.\end{center}

\subsection{Validating candidates}

For the validation of the candidate ISs we exploit their characterization using Cartesian products. Indeed, for a candidate IS $X\bot Y$ to hold on $r$ it suffices that the equation $|r(X)|\times |r(Y)| = |r(XY)| $ holds.

As an example, consider the candidate IS set $\{\{e\},\{re\}\}$ where $X=\{e\}$ and $Y=\{re\}$. Both the \textit{education} and \textit{relationship} columns each feature two distinct values, that is, $|r(X)|=2=|r(Y)|$. The projection of the given table onto \{\emph{education}, \emph{relationship}\} features four distinct tuples:

\begin{center}
(bachelors, not-in-family), \\
(bachelors, husband), \\
(hs-grad, not-in-family), and \\
(hs-grad, husband),
\end{center}

so $|r(XY)|=4$. Given these values, $|r(X)|\times |r(Y)| = |r(XY)|$ holds and therefore $\textit{education} \bot\textit{relationship}$ does indeed hold on $r$.

Now consider the candidate IS set $\{\{e\},\{s\}\}$. Again, both the \textit{education} and \textit{sex} columns each have two distinct values, so $|r(X)|=2$ and $|r(Y)|=2$. However, the projection of the given table onto \{\emph{education}, \emph{sex}\} only features three tuples:

\begin{center}
(bachelors, male),\\
(hs-grad, male), and\\
(hs-grad, female).
\end{center}

In this case, $|r(X)|\times |r(Y)| = |r(XY)|$ is not satisfied and the IS \[\textit{education}\bot\textit{sex}\] does not hold.

\subsection{Computation of \MINCOVER}

Once all the candidate ISs of the current arity level have been validated, the actually valid ISs are added to \MINCOVER. For every new IS in \MINCOVER, we remove all ISs from \MINCOVER whose column sets are subsumed by the new IS. This step ensures that we end up with a \textit{minimal} quasi-cover. Candidate ISs whose validation failed are removed from the candidate set, but every validated IS is kept in the set of candidates to generate new candidates of the incremented arity in the next step.

In the case of our running example we will only have $\{\{e\}, \{re\}\}$ in \MINCOVER and \CANDIDATES.

\subsection{New candidate ISs}

If \CANDIDATES is either empty or there is some candidate of arity $|R|$, then the algorithm terminates. Otherwise, the remaining candidates of the current arity are used to create new candidates for validation at the next arity level. This is simply done by adding each remaining singleton to one attribute set of the remaining candidate IS.

In our running example, the algorithm generates the set of candidate ISs of arity three given the remaining candidate $\{\{e\}, \{re\}\}$. We obtain:
\begin{center}
$\{\{a,e\}, \{re\}\}$, $\{\{e,ra\}, \{re\}\}$, $\{\{s,e\}, \{re\}\}$, $\{\{e\}, \{a, re\}\}$, $\{\{e\}, \{ra,re\}\}$, $\{\{e\}, \{re,s\}\}$.
\end{center}

\subsection{Optimization before validation}

Before validation of the new candidates, however, we can further prune the new candidates. For this purpose, we use the downward-closure property of ISs. In fact, for $X\boto Y$ to hold on $r$, both $X\backslash\{A\}\boto Y$ and $X\boto Y\backslash \{B\}$ must hold on $r$ for all $A\in X$ and $B\in Y$. In particular, if $r$ does not satisfy any IS of arity $n$, then $r$ cannot satisfy any IS of arity $m>n$.

In our running example, none of the new candidates of arity three can possibly hold on $r$ since there is already a subsumed IS of arity two that does not hold on $r$. That is, the algorithm terminates on our running example at this point.

\subsection{Next arity level}

The algorithm starts a new iteration at the next arity level whenever there are any new candidates.

\subsection{Missing values}

Missing values occur frequently in data. While other solutions are possible, we adopt the view that data should speak for itself and neglect any tuples from consideration when it has a missing value on any of the attributes of the current candidate ISs under consideration.

\subsection{The algorithm}

Algorithm~\ref{alg} summarizes these idea into a bottom-up IS discovery algorithm that works in worst-case exponential time in the given number of columns.

\begin{algorithm}[t]
 \caption{Bottom-up Discovery Finder\label{alg}}
  \SetKwData{Left}{left}
  \SetKwData{Up}{up}
  \SetKwFunction{FindCompress}{FindCompress}
  \SetKwInOut{Input}{input}
  \SetKwInOut{Output}{output}

\Indm
  \Input{A relation $r$ over $R$.}
  \Output{The minimal quasi-cover $\MINCOVER$ of $\True(r)$}
\Indp
  \BlankLine

 $\CANDIDATES\gets \binom{R}{2}$\;
 $\NEWCANDIDATES \gets \emptyset$\;
 \While{$\CANDIDATES$ is non-empty}{
 	\For{$\{X,Y\}\in \CANDIDATES$}{
 		use a sorting algorithm to compute $|r(X)|,|r(Y)|,|r(XY)|$\;
 		
 		\eIf{$|r(X)|\cdot |r(Y)|=|r(XY)|$ }
 			{add $\{X,Y\}$ to $\MINCOVER$\;
 			\ForAll{$A\in X$ and $Y\in B$}{ remove $\{X\setminus\{A\},Y\}$ and $\{X,Y\setminus \{B\}\}$ 				 from $\MINCOVER$}}
		{remove $\{X,Y\}$ from $\CANDIDATES$\;}	
	}
	\eIf{$\CANDIDATES$ is empty or has ISs of arity $|R|$}
	{stop}
	{\ForAll{$\{X,Y\}$ of arity $1$ greater than that of ISs in $\CANDIDATES$}{$m\gets 1$\;
		\ForAll{$A\in X$ and $B\in Y$}{
			\If{$\{X\setminus\{A\},Y\}$ or $\{X,Y\setminus \{B\}\}$ is not in \CANDIDATES}
			{$m\gets 0$}
		}
		\If{$m=1$}
		{add $\{X,Y\}$ to $\NEWCANDIDATES$}
	}}
	$\CANDIDATES \gets \NEWCANDIDATES$\;
    $\NEWCANDIDATES \gets \emptyset$
 }
\end{algorithm}

\begin{theorem}
Given a relation $r$ over schema $R$ with $k$ columns, Algorithm~\ref{alg} terminates in time $2^{O(k)}$ with the minimal quasi-cover of all ISs over $R$ that hold on $r$.
\end{theorem}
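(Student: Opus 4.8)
The plan is to prove the statement in two parts: \emph{correctness}, that the returned set \MINCOVER is exactly the minimal quasi-cover of $\True(r)$, and \emph{complexity}, that the whole computation runs in time $2^{O(k)}$. Two facts do most of the work. First, the \emph{validation criterion} is sound and complete: for a candidate $\{X,Y\}$ we have that $X\boto Y$ holds on $r$ if and only if $|r(X)|\cdot|r(Y)|=|r(XY)|$. This follows from the Cartesian-product characterisation in Section~\ref{s:problem}, since the projection map sending $t\in r(XY)$ to $(t(X),t(Y))$ is always injective, whence $|r(XY)|\leq |r(X)|\cdot|r(Y)|$ with equality exactly when $r(XY)=r(X)\times r(Y)$. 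When $X$ and $Y$ overlap on a non-constant attribute, equality cannot occur, which is harmless because the pre-step deletes all constant columns, so every satisfiable non-trivial IS has disjoint sides. Second, the \emph{downward-closure property} stated before the algorithm guarantees that the family of valid ISs is closed under shrinking either side; consequently its maximal elements (modulo the symmetry $X\boto Y\equiv Y\boto X$) form precisely the unique minimal quasi-cover.

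For correctness I would argue completeness of candidate generation by induction on the arity level. At level two all non-trivial ISs are enumerated, so the base case is immediate. For the inductive step, suppose $\{X,Y\}$ is a valid IS of arity $n+1$. By downward closure every IS obtained by deleting one attribute from $X$ or from $Y$ is valid, hence, by the induction hypothesis, appeared as a candidate at level $n$ and, being valid, survived validation and remained in \CANDIDATES. The inner test of the candidate-generation loop therefore succeeds and $\{X,Y\}$ is placed in \NEWCANDIDATES, so no valid IS is ever pruned away. Combined with soundness of the validation criterion, every valid IS is eventually tested and confirmed. The maintenance of \MINCOVER then yields the minimal quasi-cover: valid ISs are discovered in order of increasing arity, and whenever one is added every IS it subsumes is deleted, so at termination \MINCOVER contains no two comparable elements and contains every maximal valid IS. That is exactly the antichain of maximal valid ISs, i.e.\ the minimal quasi-cover. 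Finally, termination is correct: if \CANDIDATES becomes empty there is, by downward closure, no valid IS of larger arity, and the arity is in any case bounded by $|R|=k$.

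For the complexity bound I would count the candidates. Every IS assigns each of the $k$ attributes to one of ``in $X$ only'', ``in $Y$ only'', ``in both'', or ``in neither'', so there are at most $4^{k}=2^{O(k)}$ ISs in total, and each is inserted into \CANDIDATES or \NEWCANDIDATES at most once across the whole run. Processing a single candidate costs only polynomial time: the three cardinalities $|r(X)|,|r(Y)|,|r(XY)|$ are obtained by sorting $r$ on the relevant columns, while the subsumption updates to \MINCOVER and the downward-closure test each touch $O(k)$ immediate sub-ISs. Hence the total running time is $2^{O(k)}$ times a polynomial in the size of $r$, which is $2^{O(k)}$ as claimed.

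The main obstacle, and the step deserving the most care, is the inductive completeness argument for the level-wise pruning: one must show that the Apriori-style filter never discards a candidate that could later grow into a valid IS, which rests squarely on downward closure together with the loop invariant that all valid ISs of each arity are retained in \CANDIDATES for the next round. The remaining ingredients---the validation criterion and the identification of the final \MINCOVER with the unique minimal quasi-cover---are comparatively routine once the disjointness and constant-column point is dispatched.
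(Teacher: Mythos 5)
Your proposal is correct and takes essentially the same route as the paper: the paper's own proof dismisses correctness with ``clear from the description above'' (the level-wise downward-closure/Apriori argument you make explicit) and devotes its effort to the same kind of candidate-counting complexity bound you give ($2^{O(k)}$ candidates, each processed in polynomial time via sorting). If anything, your write-up is more rigorous on the correctness half than the paper's; the one shared looseness is that both you and the paper absorb the polynomial dependence on the number $n$ of rows into the stated $2^{O(k)}$ bound.
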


\begin{proof} The correctness of Algorithm~\ref{alg} is clear from the description above. Algorithm~\ref{alg} is worst-case exponential in the number $k$ of columns. Assume $r$ has $n$ rows and $k$ attributes. Step 5 takes time $O(n\log)$ per each sorting. This step is iterated at most $O(2^k)$ times during the computation. Hence in total Step 5 will take time $O(2^k n\log n)$.

In step 6, as there are at most $n$ rows, the input of this computation is of length $2\log n$. For example, schoolbook multiplication takes time $O(n^2)$ so we get an upper bound of $O((\log n)^2)$. Again, there is at most $O(2^k)$ iterations. Hence, we obtain an upper bound $O(2^k(\log n)^2)$.

Step 16 has at most $2^k$ iterations, step 18 at most $k^2$ iterations, each search through at most $2^k$ many candidates. Hence, the complexity is $2^{O(k)}$ at most.\qed
\end{proof}
 		
\section{Experiments}\label{s:experiments}

So far we have established the \NP-completeness and \W{3}-com\-plete\-ness of the discovery problem for independence statements, as well as the first algorithm that is worst-case exponential in the number of columns. As a means to illustrate practical relevance, we will apply our algorithm to several real-world data sets that have served as benchmarks for other popular classes of data dependencies.

\subsection{Data Sets}

As our algorithm is the first for the discovery of ISs, it is natural to apply it to data sets that serve as benchmark for discovery algorithms of other popular classes of data dependencies. The arguably most popular class are functional dependencies \cite{DBLP:journals/pvldb/PapenbrockEMNRZ15}, for
which the discovery problem is \W{2}-complete in the arity. Although it often makes little sense to compare results for one class to
another, the situation is a bit different here. In fact, independence and functional dependence are opposites in the sense that the
former require all combinations of left-hand side and right-hand side values, while the latter enforce unique right-hand side values for fixed left hand-side values. In total, we consider 17 data sets with various numbers of rows (from 150 to 250,000) and columns (from 5 columns to 223 columns). Some of the full data sets have been limited to just 1000 rows for functional dependencies already, and even
that proves to be a challenge for FD discovery algorithms and our algorithm. Note that all data sets are publicly available\footnote{\url{hpi.de/naumann/projects/repeatability/data-profiling/fds.html}}.


\subsection{Main Results}

Table~\ref{tab:results} shows our main results. We say that a column is \emph{constant} if only one value appears in it. We list for each data set the numbers of its columns (\#c), rows (\#r), constant columns (\#c-c), ISs in the quasi-minimal cover (\#IS) together with the maximum arity among any discovered IS (inside parentheses), FDs in a LHS-reduced cover (\#FD), and the best times to find the latter two numbers, respectively. In particular, IS time lists the runtime of Algorithm~\ref{alg} in seconds, and FD time is the fastest time to find \#FD by the algorithms in \cite{DBLP:journals/pvldb/PapenbrockEMNRZ15}.

\begin{table}[h]
	\centering\caption{Results of IS discovery and comparison}
	\label{tab:results}
		\begin{tabular}{l | r  r r| r r| r r }
			Data set 		& \#c & \#r & \#c-c & \#IS 		& \#FD	& IS time	& FD time	\\
		\hline
			iris 			& 5		& 150		& 0				& 0			& 4	& 0.02		& 0.1	\\
			balance	& 5		& 625		& 0		& 11 (4)  	& 1			& 0.19		& 0.1	\\
			chess			& 7		& 28,056	& 0		& 22 (4) 	& 1			& 4.36		& 1	\\
			abalone			& 9		& 4,177		& 0		& 0	 		& 137		& 0.37		& 0.6	\\
			nursery			& 9		& 12,960	& 0		& 127 (8) 	& 1			& 132.89	& 0.9	\\
			breast	& 11	& 699		& 0		& 0	 		& 46		& 0.047		& 0.5	\\
			bridges			& 13	& 108		& 0		& 0	 		& 142		& 0.03		& 0.2	\\
			echo	& 13	& 132		& 1		& 0	 		& 538		& 0.06		& 0.2		\\
			adult			& 14	& 48,842	& 0		& 9	(3) 	& 78		& 8.89		& 5.9	\\
			letter			& 17	& 20,000	& 0		& 0  		& 61		& 1.51		& 6.0	\\
			ncvoter			& 19	& 1,000		& 1 	& 0	 		& 758		& 0.18		& 1.1	\\
			hepatitis		& 20	& 155		& 0		& 21 (4)	& 8,250		& 15.45		& 0.8	\\
			horse			& 27	& 368		& 0		& 39 (3)	& 128,726	& 4.07		& 7.2	\\
			fd-red	& 30	& 250,000	& 0		& 0 		& 89,571 	& 186.02	& 41.1	\\
			plista			& 63	& 1,000 	& 24	& ?	(5)		& 178,152	& ML		& 26.9	\\
			flight			& 109	& 1,000 	& 39	& ?	(4)		& 982,631	& ML		& 216.5	\\
			uniprot			& 223	& 1,000 	& 22	& 0			& ?			& 16.61		& ML	\\
		\end{tabular}
\end{table}

Our first main observation is that the algorithm performs according to its design within the computational limits we have established for the problem. Indeed, the runtime of the algorithm blows up with the arity of the ISs that are being discovered as well as the columns that need to be processed. Larger row numbers means that the candidate ISs require longer validation times. In particular, the algorithm terminates quickly when no ISs can be discovered for small arities. The best example is \emph{uniprot} which does not exhibit any non-trivial ISs, and this can be verified within 16.61 seconds despite being given 223 columns (201 after removal of the columns that have a constant value). In sharp contrast, no FD discovery algorithm is known that can discover all FDs that hold on the same data set since there are too many.

During these main experiments we made several other observations that are worth further commenting on in subsequent subsections. Firstly, several of the real-world data sets contain constant columns, so the impact of those is worth investigating. Secondly, most of the data sets contain missing values, just like in practice. Hence, we would like to say something about the different ways of handling missing values. Finally, the semantic meaningfulness of the discovered ISs can be discussed. While the decision about the meaningfulness will always require a domain expert, a ranking of the discovered ISs appears to be beneficial in practice and interesting in theory.

\subsection{Constant Columns}

Out of the 17 test data sets, five have columns with constant values. Table~\ref{t:const} compares runtime and results when constant columns are kept and removed, respectively, from the input data set.

\begin{table}[h]
	\centering\caption{Illustrating the effect of constant column removal}
		\label{t:const}
		\begin{tabular}{l | r  r r |r|r r r }
			&  \multicolumn{3}{c|}{Before removal} & & \multicolumn{3}{c}{After removal}\\
			Data set 		& \#c & time 	& \#IS 	& \#c-c & \#nc-c & time 	& \#ISs\\
			\hline
			echo        	& 13	& 2082.5	& 1(13)	& 1 		& 12	& 	0.06	& 0	 \\
			ncvoter			& 19	& TL		& 1(19)	& 1			& 18 	&	0.21	& 0	 \\
			plista			& 63	& ML 		& ? (4) & 24 		& 39 	&	ML		& ? (5)\\
			flight			& 109	& ML		& ? (4) & 39 		& 70  	&	ML		& ? (4)\\
			uniprot			& 223	& TL		& 1(223)& 22 		& 201	&  	16.61	& 0	\\
		\end{tabular}
\end{table}

Indeed, the table provides clear evidence that the removal of constant columns from the input data set to independence discovery algorithms is necessary. Evidently, every data set over schema $R$ trivially satisfies the IS \[A\boto R-A\] whenever $A$ is a constant column. Hence, our algorithm would need to explore all arity levels until all columns of the schema are covered. This, however, is prohibitively expensive as illustrated in Table~\ref{t:const}.

\subsection{Missing Values}

Out of the 17 data sets nine have missing values. The number of these missing values is indicated in the table below. There are different ways in which missing values can be handled, and we have investigated two basic approaches to illustrate differences. In the first approach, we simply treat occurrences of missing values as any other value. In the second approach, we do not consider tuples whenever they feature a missing value in some column that occurs in a candidate IS. For the purpose of stating the precise semantics, we denote an occurrence of a missing value by the special marker symbol ``NA" and distinguish the revised IS from the previous semantics by writing $\bot_0$ instead of $\bot$. An IS $X\bot_0 Y$ is satisfied by $r$ if and only if for every two tuples $t_1,t_2\in r$ such that $t_1(A)\not=$ ``NA" and $t_2(A)\not= $ ``NA" for all $A\in X\cup Y$, there is some $t\in r$ such that $t(X)=t_1(X)$ and $t(Y)=t_2(Y)$.

\begin{table}[h]
\centering		\caption{Differences of semantics for missing values}
		\label{tab:alg2na}
		\begin{tabular}{l | r  r r |r r| r r}
			&&&&  \multicolumn{2}{|c|}{$\bot$}  & \multicolumn{2}{c}{$\bot_0$}\\			
			dataset 		& \#c & \#r 	& \#miss & time 		& \#IS 		& time		&\#IS\\
			\hline
			breast 	        & 11	& 699		& 16			& 0.04		& 0	 		& 0.04		& 0	\\
			bridges			& 13	& 108		& 77			& 0.03		& 0	 		& 0.18		& 4(3)\\
			echo	        & 13	& 132		& 	132		& 0.06		& 0	 		& 0.38		& 5(4)\\
			ncvoter			& 19	& 1,000		& 2863		    & 0.18		& 0	 		& 0.18 			& 0	\\
			hepatitis		& 20	& 155		& 167			& 15.45		& 21 (4)	& 1597.23	& 855 (6) \\
			horse			& 27	& 368		& 1605			& 4.07		& 39 (3)	& 35.56		& 112 (3)	\\
			plista			& 63	& 1,000 	&	23317        & ML		& ? (5)		& ML		& ?	(5)\\
			flight			& 109	& 1,000 	&	51938        & ML		& ?	(4)		& ML		& ?	(4)\\
			uniprot			& 223	& 1,000 	&  179129      	& 16.61		& 0			& 16.61			& 0	\\
		\end{tabular}
\end{table}

The main message is that the choice of semantics for missing values clearly affects the output and runtime of discovery algorithms. It is therefore important for the users of these algorithms to make the right choice for the applications they have in mind. In general it is difficult to pick a particular interpretation for missing values, and relying just on the non-missing values might be the most robust approach under different interpretations.

\subsection{Redundant independence statements}

It is possible that some of the IS atoms that our algorithm discovers are already implied by other ISs that have been discovered. Here, a set $\Sigma$ of ISs is said to imply an IS $\varphi$ if and only if every relation that satisfies all ISs in $\Sigma$ will also satisfy $\varphi$. Hence, if $\Sigma$ does imply $\varphi$, then it is not necessary to list $\varphi$ explicitly. In other words, if we list all ISs in $\Sigma$ it would be redundant to list $\varphi$ as well. The question arises how we can decide whether $\Sigma$ implies $\varphi$. Table~\ref{tab-rules} shows an axiomatic characterization of the implication problem for ISs \cite{DBLP:conf/wollic/KontinenLV13}.

\begin{table}
\caption{Axiomatization $\mathfrak{I}$ of Independence in Database Relations\label{tab-rules}}
\[\fbox{$\begin{array}{c@{\hspace*{.25cm}}c}
\cfrac{}{X\boto\emptyset} & \cfrac{X\boto Y}{Y\boto X} \\
\text{(trivial independence, $\mathcal{T}$)} & \text{(symmetry, $\mathcal{S}$)} \\ \\

\cfrac{X\boto YZ}{X\boto Y} & \cfrac{X\boto Y\quad XY\boto Z}{X\boto YZ} \\
\text{(decomposition, $\mathcal{D}$)} & \text{(exchange, $\mathcal{E}$)}
\end{array}$}\]
\end{table}

As an illustration, we mention a valid IS of arity 8 from the \textit{nursery} data set. Indeed, the IS \[\varphi=\{1\}\bot \{2,3,4,5,6,7,8\}\] is implied by the following four ISs:
\begin{center}
$\{1,5,6,7\}\bot\{2,3,4,8\}$, $\{1,5,6,8\}\bot\{2,3,4,7\}$, \\
$\{1,5,7,8\}\bot\{2,3,4,6\}$, and $\{1,6,7,8\}\bot\{2,3,4,5\}$,
\end{center}
using repetitive application of the decomposition and exchange rule. Therefore, \[ \{1\}\bot\{2,3,4,5,6,7,8\}\] is redundant and can be removed without loss of information. This illustrates that there is still scope to reduce the set of discovered ISs without loss of information.

\subsection{Semantical Meaningfulness}

Algorithms cannot determine if a discovered atom is semantically meaningful for the given application domain, or only holds accidentally on the given data set. Ultimately, this decision requires a domain expert. For example, the output for the \textit{adult} data set includes $\textit{education}\bot\textit{sex}$ and $\textit{relationship}\bot\textit{sex}$. Considering the \textit{adult} data set was collected in 1994 in the United States, the IS $\textit{education}\bot\textit{sex}$ makes sense semantically. However, $\textit{relationship}\bot\textit{sex}$ is not really semantically meaningful when the \textit{relationship} column contains values such as \textit{husband} and \textit{wife}. Same sex marriage was not legalized in the United States until 2015, so if a person has the value \textit{husband} for \textit{relationship}, then he should also have the value \textit{male} for \textit{sex}, and similarly for \textit{wife} and \textit{female}. However, there is one tuple in the data set which contains the values \textit{husband} and \textit{female} and three tuples with the combination \textit{wife} and \textit{male}, which is why the IS was validated. With 48,842 tuples in total and only four with these value combinations, it is likely that the entries are mistakes and $\textit{relationship}\bot\textit{sex}$ really should not have been satisfied.

\subsection{Arity, Column, and Row Efficiency}

It is further interesting to illustrate impacts on the runtime and number of candidate ISs with growing numbers of arity, columns, and rows. As an example, we have conducted such experiments on the data set \emph{nursery}.

\begin{figure}[h]
	\begin{center}
		\includegraphics[width=0.5\columnwidth]{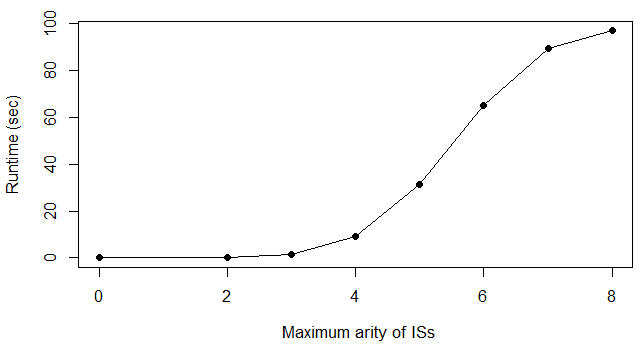}\includegraphics[width=0.5\columnwidth]{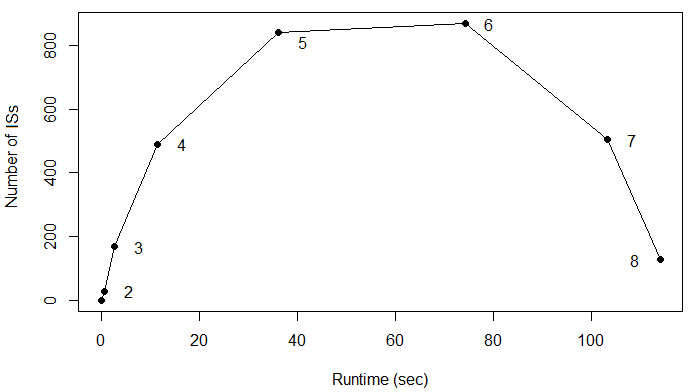}\\
		\caption{Runtime and valid IS numbers under arity increases}
		\label{fig:nursearity}
	\end{center}
\end{figure}

Figure \ref{fig:nursearity} shows unsurprisingly that the runtime increases as the arity of the candidate ISs increases. Indeed, the algorithm first tests all the ISs of lower arity. The rate of increase is reasonably flat at the start, then quite drastic, and then  reasonably flat again. This is correlated to the number of ISs at the current arity level, since the more ISs there are the more candidate ISs the algorithm needs to validate. As arity increases, multiple valid ISs may be covered by fewer IS with higher arity, thus decreasing the IS count. For example, one IS of arity five on \textit{nursery} is $\{1\}\bot\{2,3,4,5\}$, covering
\begin{center}
$\{1\}\bot\{2,3,4\}$, $\{1\}\bot\{2,3,5\}$, $\{1\}\bot\{2,4,5\}$, and $\{1\}\bot\{3,4,5\}$
\end{center}
from the arity level 4.

 \begin{figure}[H]
 	\begin{center}
	 	\includegraphics[width=0.5\columnwidth]{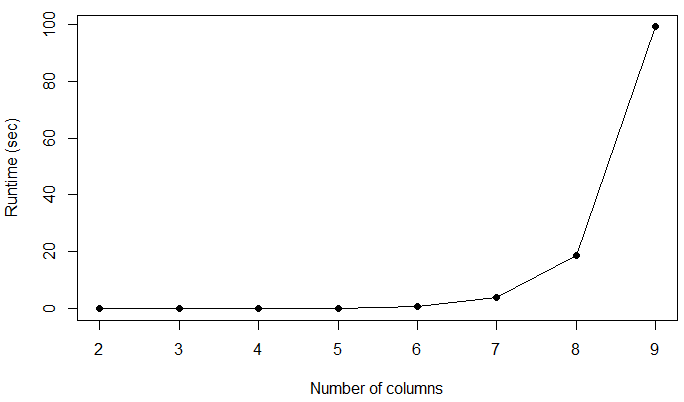}\includegraphics[width=0.5\columnwidth]{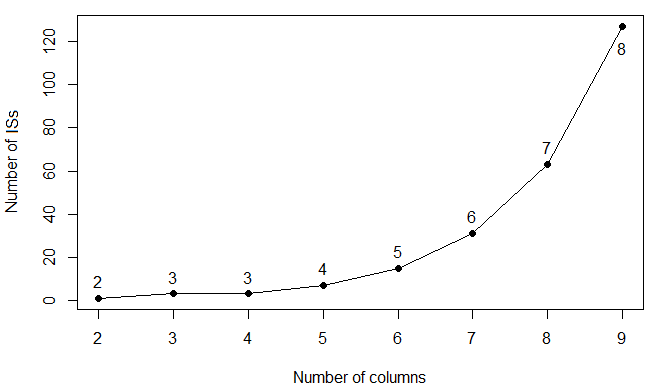}\\
	 \end{center}
	 \caption{Runtime and valid IS numbers under column increases}
	 \label{fig:nursecol}
 \end{figure}

For experiments with column efficiency we run our algorithm on projections of \emph{nursery} on the following randomly created subsets of columns: \begin{center} $\{7,8\}$, $\{7,8,9\}$, $\{7,8,9,3\}$, $\{7,8,9,3,1\}$, $\{7,8,9,3,1,6\}$, $\{7,8,9,3,1,6,4\}$,  $\{7,8,9,3,1,6,4,2\}$, and $\{7,8,9,3,1,6,4,2,5\}$.\end{center} Figure~\ref{fig:nursecol} shows an exponential blow up of the runtime and number of valid ISs in the growing number of columns.

\begin{figure}[h]
	\begin{center}
		\includegraphics[width=0.5\columnwidth]{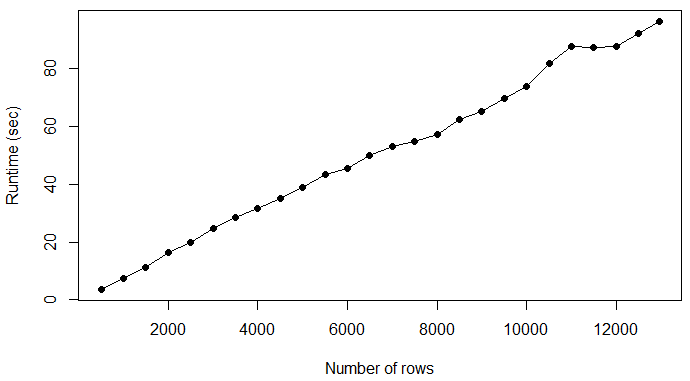}\includegraphics[width=.5\columnwidth]{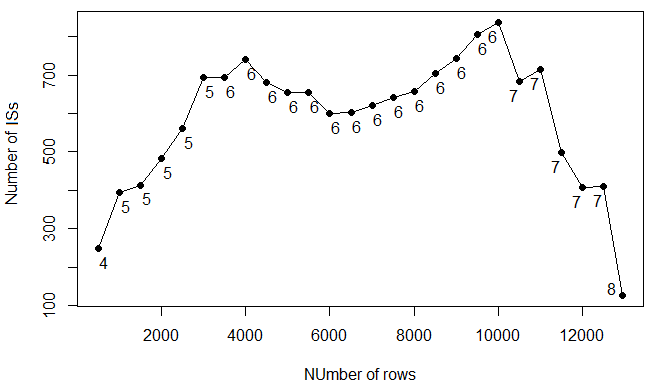}\\
	\end{center}
	\caption{Runtime and valid IS numbers under row increases}
	\label{fig:roweff}
\end{figure}

For experiments with row efficiency we run our algorithm on 26 randomly created subsets of \emph{nursery}. Starting with 500 rows, the next data set is created from the previous one by adding 500 additional randomly selected rows from the remaining data set. As can be seen in Figure~\ref{fig:roweff}, the runtime grows linearly in the growing number of rows. This is because an increase in rows does i) not affect the number of candidate ISs the algorithm needs to validate, but ii) only slows down the validation process of candidate ISs.
	
Figure~\ref{fig:roweff} shows a steady increase in the number of ISs as the number of rows increases, but a decline as the rows approximate the full data set. Indeed, with more rows, the subsets are more likely to contain all the combinations of values to satisfy independence. Moreover, ISs of higher arity cover multiple ISs of lower arity, and this coverage eventually results in a decline for the number of valid ISs.

\section{Approximate Independence}\label{sec:approximate}

In practice, independence is a strong assumption. Even in cases where an independence statement should hold, it may not hold because of data quality or other problems. For many important tasks it is not necessary that an independence statement holds, but it is more useful to know to which degree the IS holds. This can be formalized by the notion of approximate independence. We will use this section to formally introduce this notion, and conduct experiments on our benchmark data to illustrate how the degree of independence affects the number of discovered approximate independence statements as well as the runtime efficiency of the algorithm that discovers them. We conclude this section with some examples that provide some qualitative analysis of approximate independence statements in our benchmark data.

\subsection{Introducing Approximate Independence}

The intuition of an approximate independence statement is as follows. We know that a relation $r$ satisfies the IS $X\bot Y$ if and only if $|r(XY)|=|r(X)|\times |r(Y)|$. In fact, as $|r(XY)|\le |r(X)|\times |r(Y)|$ is always satisfied, $r$ satisfies $X\bot Y$ if and only if $|r(XY)|\ge |r(X)|\times |r(Y)|$ holds. Hence, the ratio
\[ \epsilon_r=\cfrac{|r(XY)|}{|r(X)|\times |r(Y)|} \]
quantifies the degree by which an IS holds on a given data set. If the ratio is 1, then the IS holds. Consequently, we can relax the IS assumption according to our needs by stipulating that the ratio $\epsilon_r$ is not smaller than some threshold $\epsilon\in [0,1]$.

\begin{definition}
For a relation schema $R$, subsets $X,Y\subseteq R$, and a real $\epsilon\in [0,1]$, we call the statement $X\bot_\epsilon Y$ an \emph{approximate independence statement} (aIS). The aIS  $X\bot_\epsilon Y$ is said to hold on a relation $r$ over $R$ if and only if \[\epsilon_r=\cfrac{|r(XY)|}{|r(X)|\times |r(Y)|}\ge\epsilon\] holds. We call $\epsilon_r$ the independence ratio of $X\bot Y$ in $r$.
\end{definition}

During cardinality estimation in query planning it is prohibitively expensive to compute the cardinality of projections, which means that independence between (sets of) attributes is often simply assumed to ease computation. Hence, cardinalities are only estimated. However, knowing what the independence ratio of a given IS in a given data set is, means that we can replace the estimation by a precise computation. Consequently, we would like to know which approximate ISs hold on a given data set. This, however, is a task of data profiling, and we can adapt Algorithm~\ref{alg} to compute all aISs for a given threshold $\epsilon$. Instead of checking in line 6 whether $|r(X)|\cdot |r(Y)|=|r(XY)|$ holds, we simply check whether $\epsilon_r\ge\epsilon$ holds.

The computational limitations of IS discovery carry over to the approximate case. The approximate variant of $\IAD$ asks to determine whether $r$ satisfies some $k$-ary aIS $X\bot_\epsilon Y$ for a given relation $r$, a natural number $k$, and a threshold $\epsilon\in[0,1]$. An immediate consequence of our earlier results is that this problem is $\NP$-hard and $\W{3}$-hard in the arity since setting $\epsilon =1$ brings us back to $\IAD$. Whether some reduction works vice versa is not so clear. However, that this problem is in $\NP$ follows again by a simple argument; the only difference is, as stated above, that one now has to verify an inequality statement instead of an equality statement.

\subsection{Experiments}

With this simple adaptation of Algorithm~\ref{alg}, we conducted additional experiments to discover all aIAs whose independence ratio in a given benchmark data set meet a given threshold $\epsilon$.

Firstly, the experiments demonstrate that the number of aISs typically increases with lower thresholds. This is not surprising since the aISs that hold with a threshold $\epsilon$ will also hold with a threshold $\epsilon'\le\epsilon$. However, if new aISs are added for lower thresholds, then these may capture multiple aISs, typically but not exclusively when the new aISs have higher arity. In such cases the actual number in the representation of the output can be lower than that for bigger thresholds. In all of the subsequent figures, the left-hand side shows the different numbers of aISs for different choices of the threshold $\epsilon$. The data labels on these figures refer to the maximum arity across all of the aISs that have been discovered for the given threshold.

Secondly, the experiments demonstrate that the runtime of the algorithm increases when the threshold $\epsilon$ is lowered. Again, this is not surprising because of the increasing numbers of candidate and valid aISs that occur with lower thresholds. The right-hand side of all the subsequent figures illustrates the runtime behavior for different choices of the thresholds. As before, there are also cases in which the runtime becomes faster with lower thresholds. This typically occurs when new aISs are found (quickly) that cover many other valid aISs.

 \begin{figure}[H]
 	\begin{center}
	 	\includegraphics[width=0.5\columnwidth]{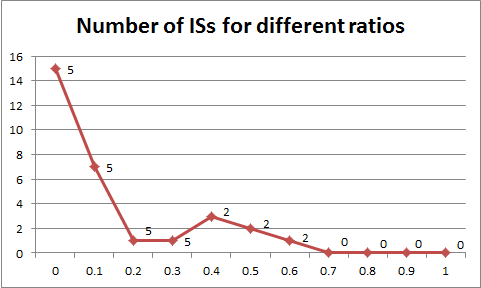}\includegraphics[width=0.5\columnwidth]{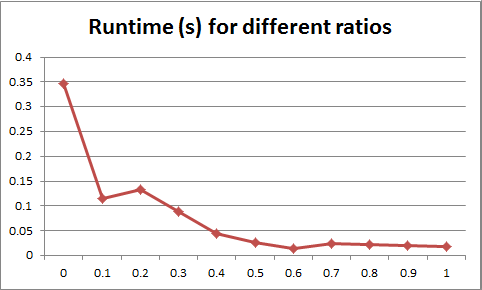}\\
	 \end{center}
	 \caption{Iris: IS numbers and runtime under different ratios}
	 \label{fig:iris-ratio}
 \end{figure}

 \begin{figure}[H]
 	\begin{center}
	 	\includegraphics[width=0.5\columnwidth]{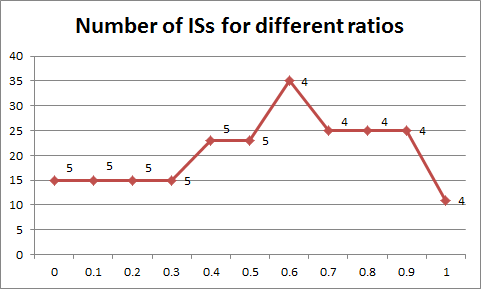}\includegraphics[width=0.5\columnwidth]{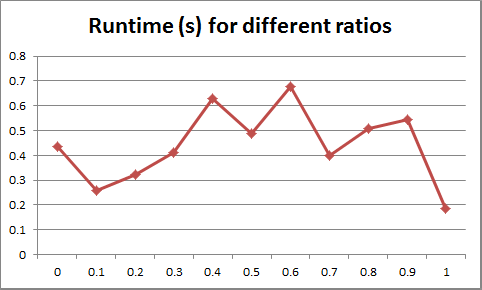}\\
	 \end{center}
	 \caption{Balance: IS numbers and runtime under different ratios}
	 \label{fig:balance-ratio}
 \end{figure}

 \begin{figure}[H]
 	\begin{center}
	 	\includegraphics[width=0.5\columnwidth]{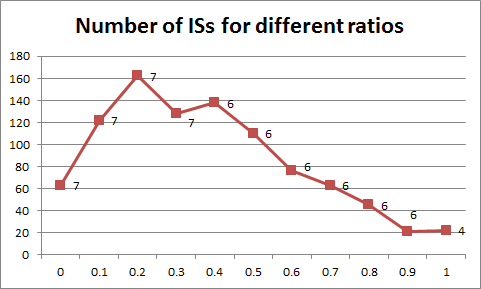}\includegraphics[width=0.5\columnwidth]{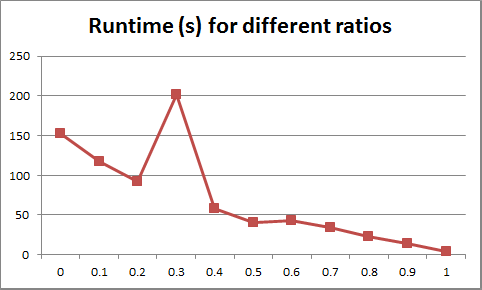}\\
	 \end{center}
	 \caption{Chess: IS numbers and runtime under different ratios}
	 \label{fig:chess-ratio}
 \end{figure}

 \begin{figure}[H]
 	\begin{center}
	 	\includegraphics[width=0.5\columnwidth]{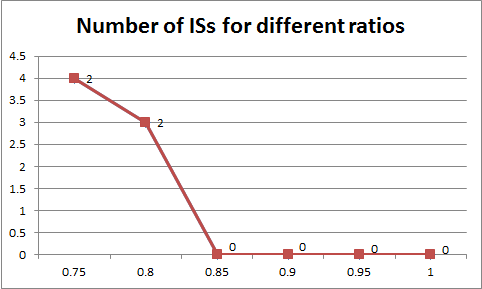}\includegraphics[width=0.5\columnwidth]{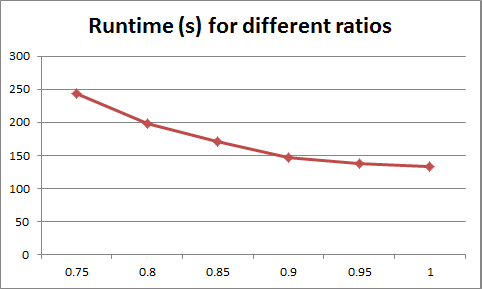}\\
	 \end{center}
	 \caption{Abalone: IS numbers and runtime under different ratios}
	 \label{fig:abalone-ratio}
 \end{figure}

 \begin{figure}[H]
 	\begin{center}
	 	\includegraphics[width=0.5\columnwidth]{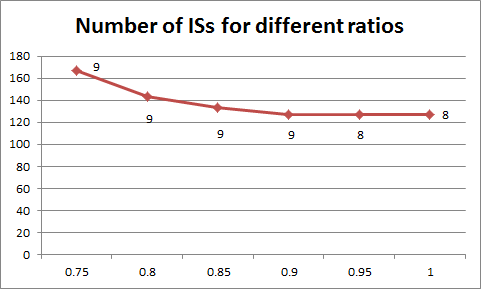}\includegraphics[width=0.5\columnwidth]{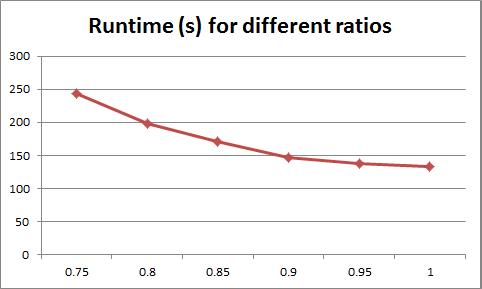}\\
	 \end{center}
	 \caption{Nursery: IS numbers and runtime under different ratios}
	 \label{fig:nursery-ratio}
 \end{figure}

 \begin{figure}[H]
 	\begin{center}
	 	\includegraphics[width=0.5\columnwidth]{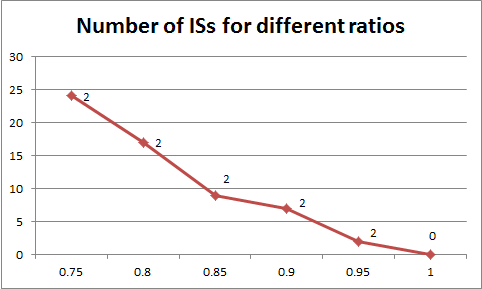}\includegraphics[width=0.5\columnwidth]{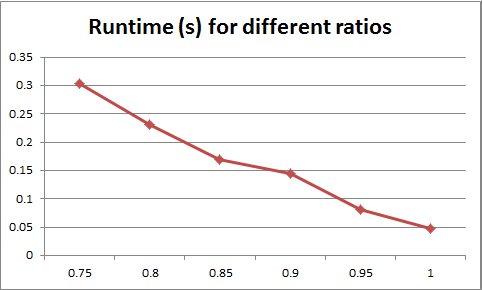}\\
	 \end{center}
	 \caption{Breast: IS numbers and runtime under different ratios}
	 \label{fig:breastcancer-ratio}
 \end{figure}

 \begin{figure}[H]
 	\begin{center}
	 	\includegraphics[width=0.5\columnwidth]{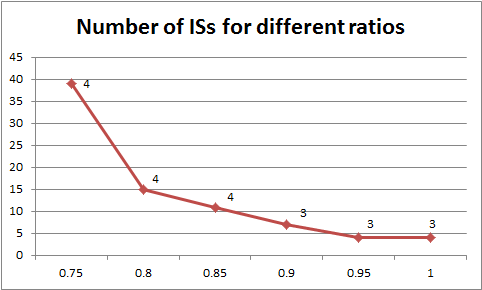}\includegraphics[width=0.5\columnwidth]{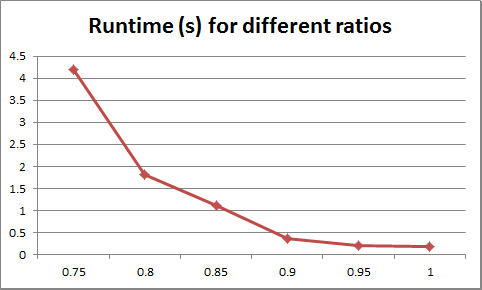}\\
	 \end{center}
	 \caption{Bridges: IS numbers and runtime under different ratios}
	 \label{fig:bridges-ratio}
 \end{figure}

 \begin{figure}[H]
 	\begin{center}
	 	\includegraphics[width=0.5\columnwidth]{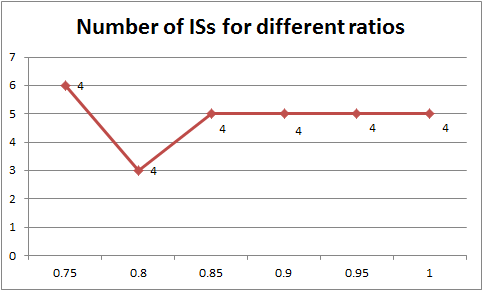}\includegraphics[width=0.5\columnwidth]{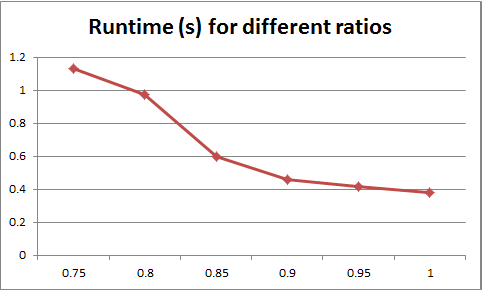}\\
	 \end{center}
	 \caption{Echo: IS numbers and runtime under different ratios}
	 \label{fig:echo-ratio}
 \end{figure}

 \begin{figure}[H]
 	\begin{center}
	 	\includegraphics[width=0.5\columnwidth]{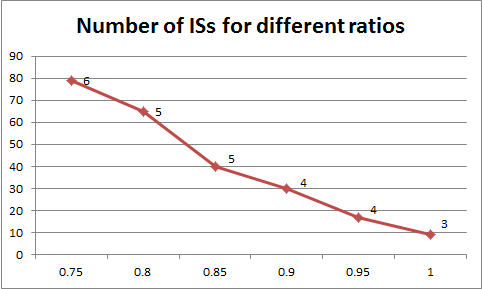}\includegraphics[width=0.5\columnwidth]{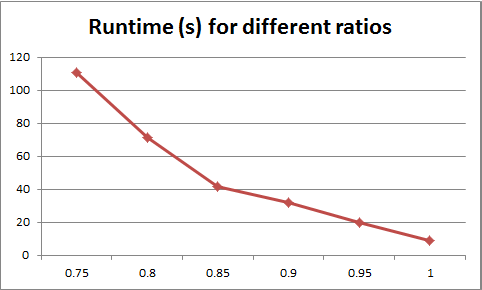}\\
	 \end{center}
	 \caption{Adult: IS numbers and runtime under different ratios}
	 \label{fig:adult-ratio}
 \end{figure}

 \begin{figure}[H]
 	\begin{center}
	 	\includegraphics[width=0.5\columnwidth]{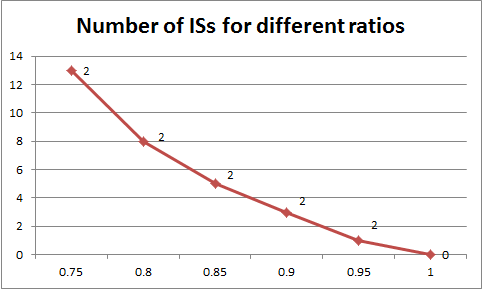}\includegraphics[width=0.5\columnwidth]{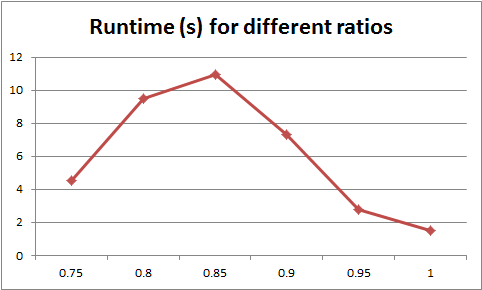}\\
	 \end{center}
	 \caption{Letter: IS numbers and runtime under different ratios}
	 \label{fig:letter-ratio}
 \end{figure}

 \begin{figure}[H]
 	\begin{center}
	 	\includegraphics[width=0.5\columnwidth]{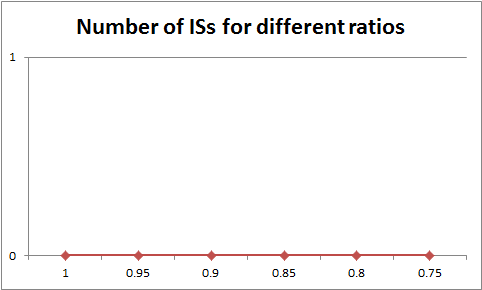}\includegraphics[width=0.5\columnwidth]{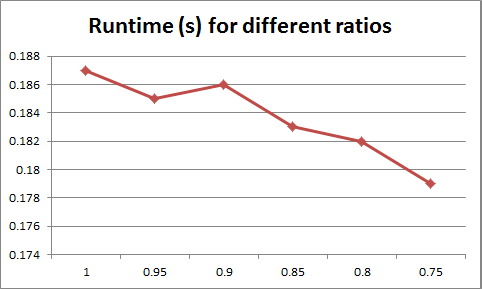}\\
	 \end{center}
	 \caption{NCVoter: IS numbers and runtime under different ratios}
	 \label{fig:ncvoter-ratio}
 \end{figure}

\subsection{Some qualitative analysis}

One motivation for approximate independence statements is their ability to recall actual independence statements that are not satisfied on the given data set due to some dirty data. In fact, Algorithm~\ref{alg} for the discovery of ISs can only discover ISs that hold on the data set, so even when there are ISs that should but do not hold, then the unmodified algorithm cannot discover them. However, after modifying Algorithm~\ref{alg} to discover all aISs for some given threshold $\epsilon$, some ISs that should actually hold can be discovered.

One example occurs in the data set \emph{hepatitis}. For the data set to be representative the columns \emph{age} and \emph{sex} should really be independent, but they are not. In fact, there are 49 distinct values for \emph{age} in the data set, and 2 distinct values for \emph{sex} in the data set, but 60 distinct value combinations on the projection onto $\{\emph{age},\emph{sex}\}$, so $49\cdot 2=98\not=60$. Hence, the IS $\emph{age}\bot\emph{sex}$ became only discoverable after choosing $\epsilon=0.6$, because its independence ratio in \emph{hepatitis} is $\approx 0.612$.

Ultimately, only a domain expert can make the decision whether an (approximate) IS is or should actually be valid. However, without looking at aISs, domain experts may never be guided towards considering an aIS that might be valid. On the other hand, the smaller the threshold the larger the number of aISs to consider. So, ultimately the choice of the threshold $\epsilon$ is important, too. For example, in the data set \emph{nursery} the ISs $6\bot 11$ and $9\bot 11$ have independence ratio $\epsilon_{\textit{nursery}}=0.95$, so are not discoverable as ISs. Column 11 indicates whether a cancer is \emph{benign} or \emph{malign}, while columns 6 and 9 indicate a \emph{single epithelial cell size} and \emph{normal nucleoli}, so could potentially be actual ISs, but it requires some domain expertise whether that is the case.

Apart from these considerations, however, approximate ISs have other uses as the approximation ratio $\epsilon_r$ is important for other tasks, such as cardinality estimation in query planning.

\section{Related Work}\label{sec:related}

Our research is the first to study the discovery problem for the simplest notion of independence. This is surprising for several reasons: 1) notions of independence are essential in many areas, in particular databases, artificial intelligence, and computer science, 2) the discovery problem has been an important computational problem for decades, and has recently gained new popularity in the context of artificial intelligence (under the name learning), big data, data mining, and data science, and 3) data profiling is an area of interest for researchers and practitioners, and while the discovery problem of many database constraints has been extensively studied in the past, this has not been the case for independence statements.
Indeed, in the context of databases different popular notions of independence have been studied. The concept of ISs that we use here has been studied as early as 1980 by Jan Paredaens \cite{DBLP:journals/jcss/Paredaens80}. He used the name \emph{crosses} instead, most likely as a reference to one of the most fundamental query operators: the cross product. The axiomatization from Table~\ref{tab-rules} is essentially the same as established in \cite{DBLP:journals/jcss/Paredaens80}, except that the attribute sets in crosses are defined to be disjoint while they do not need to be disjoint in ISs. As a foundation for distributed computing, graphical reasoning, and Bayesian networks, Geiger, Pearl, and Paz axiomatized so-called pure independence statements \cite{geiger:1991}, which are the probabilistic variants of ISs. Notably, the axioms for probabilistic pure independence are very similar to those for crosses. In the context of database schema design, multivalued dependencies (MVDs) were introduced by Ronald Fagin as an expressive class of data dependencies that cause a majority of redundant data values. Indeed, a relation satisfied an MVD if and only if the relation is the lossless join between two of its projections \cite{fagin77}. This fundamental decomposition property serves as a foundation for the well-known Fourth Normal Form (4NF) \cite{fagin77}. Embedded multivalued dependencies (EMVDs) are MVDs that hold on a projection of given relation, and are therefore even more expressive. Unfortunately, the finite and unrestricted implication problems are undecidable \cite{herrmann:2006,Herrmann2006}. In the context of artificial intelligence and statistics, the concept of an MVD is equivalent to the concept of saturated conditional independence. However, the more general and more important concept of conditional independence is not equivalent to the concept of embedded MVDs \cite{studeny:1993}, and the decidability of the implication problem for conditional independence is still open. The duality/similarity between concepts of independence continue even further. In data cleaning, the concept of conditional dependencies were introduced recently \cite{Fan:2008:CFD}. In this context, the word conditional refers to the fact that the dependency must not necessarily hold for all values of the involved attributes, but only conditional on specific values. In AI, this extension is known as \emph{context-specific independencies} \cite{Boutilier:1996}.

Much attention has been devoted to discovering conditional independencies in AI. The task of learning Bayesian networks to encode the underlying dependence structure of data sets is NP-complete and has been the topic of numerous articles and books \cite{Chickering1996,Heckerman1995,Neapolitan:2003}. Many algorithms employ so-called independence-based approach in which conditional independence tests are performed on the data sets and successively used to constrain the search space for the underlying graphical structure (e.g., the PC and SGS algorithms for Bayesian networks \cite{Spirtes2000}, or the GSMN algorithm for Markov networks 
\cite{Bromberg:2009}). Vice versa, assuming that the Bayesian network is given, the method of d-separation provides a tool for tractable identification of conditional independencies between random variables \cite{pearl90}.

Despite the fact that the discovery algorithms for various popular classes of data dependencies perform well in practice, there are usually no theoretical performance guarantees. This is not very surprising as all three problems are known to be likely intractable: finding a minimum unique column combination is $\NP$-complete \cite{DBLP:journals/jacm/BeeriDFS84} and cannot be approximated within a factor of $1/4\log{n}$ (under reasonable complexity assumptions) \cite{DBLP:conf/cocoon/AkutsuB96}, finding a minimum functional dependency is also $\NP$-complete \cite{davies:1994} and finding a maximum inclusion dependency is $\NP$-complete even for restricted cases \cite{DBLP:journals/ijis/KantolaMRS92}. The parameterized complexity for the discovery of unique column combinations, functional and inclusion dependencies was recently studied in \cite{Blasius0S16}. Parameterized on the arity, that is, the size of $X$ for all, a unique column combination $\textit{u}(X)$, a functional dependency (FD) $X\to A$, and an inclusion dependency (IND) $r[X]\sub r'[Y]$, it was shown that the discovery problems are complete for the second and third levels of the $\Wsym$ hierarchy, respectively. The case for inclusion dependencies is particularly interesting as many natural fixed-parameter problems usually belong to either $\W{1}$ or $\W{2}$. Our results about the $\W{3}$-completeness of the discovery problem for independence statements in their arity is therefore completing the picture by another interesting class.

In data profiling, for a recent survey see \cite{DBLP:series/synthesis/2018Abedjan}, investigations on the discovery problem have mostly been targeted at unique column combination \cite{DBLP:journals/pvldb/HeiseQAJN13,DBLP:journals/vldb/KohlerLLZ16,DBLP:conf/vldb/SismanisBHR06,DBLP:journals/pvldb/WeiL19-2}, functional dependencies \cite{DBLP:conf/sigmod/PapenbrockN16,DBLP:conf/icde/WeiL19,DBLP:journals/pvldb/WeiL19}, and inclusion dependencies \cite{DBLP:journals/tods/TschirschnitzPN17}, and due to the rise of data quality problems also on their conditional/context-specific variants such as conditional functional dependencies \cite{DBLP:journals/tkde/FanGLX11} and conditional inclusion dependencies \cite{DBLP:conf/cikm/BauckmannALMN12}. In contrast, notions of independence have only received restricted attention in data profiling. In fact, only MVDs have been considered so far and only by few authors \cite{DBLP:journals/ida/SavnikF00}. Neither independence statements, nor their approximate nor their context-specific variants have been explored in terms of the discovery problem. Our article closes this gap, and hopes to initiate research on the discovery problem for more sophisticated notions of independence.

 \begin{figure}[t]
 	\begin{center}
	 	\includegraphics[width=\columnwidth]{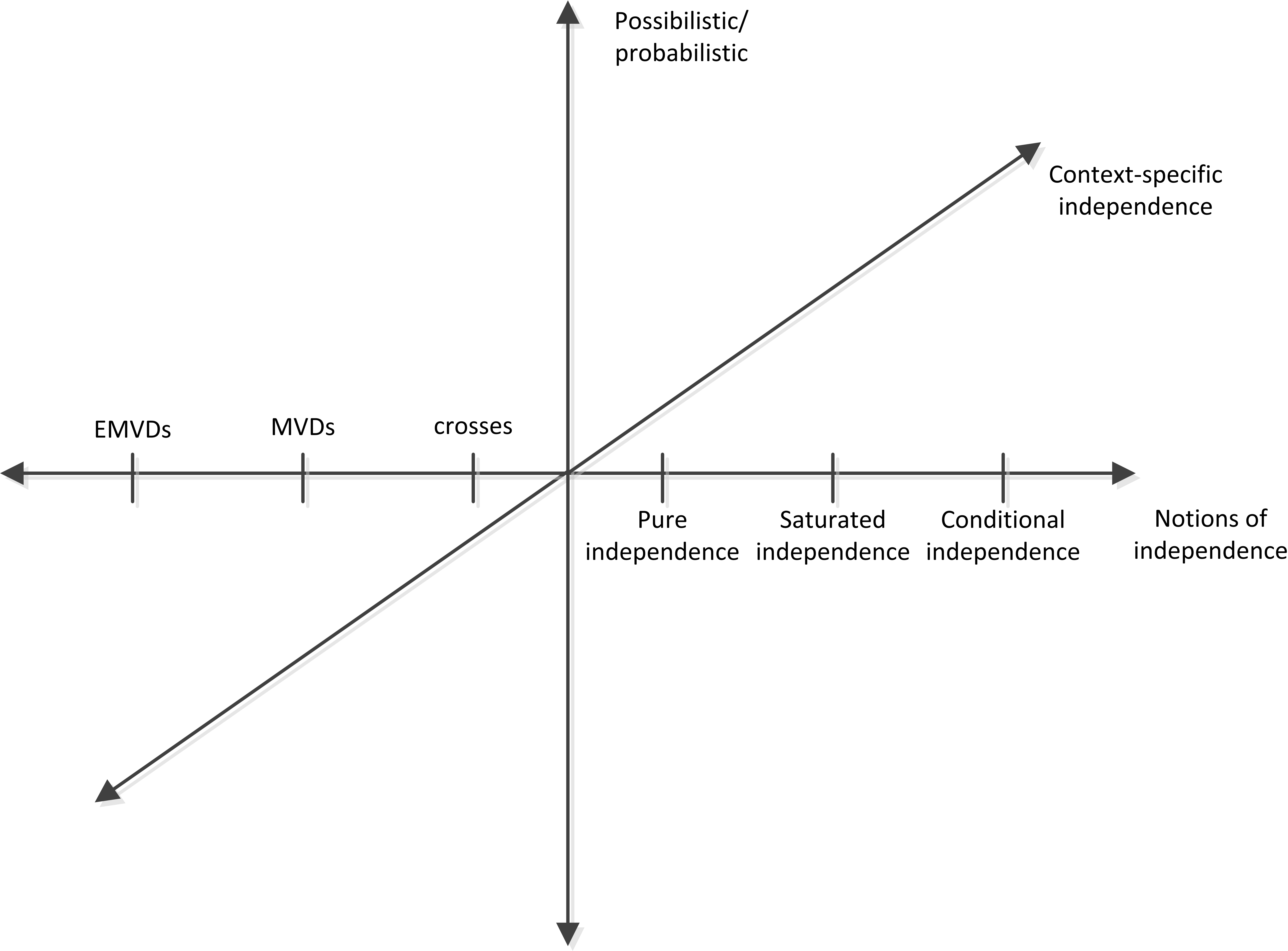}\\
	 \end{center}
	 \caption{Dimensions of notions for independence}
	 \label{fig:future}
 \end{figure}

\section{Future Work}\label{sec:future}

For future work we encourage research on the discovery of other notions of independence, their context-specific variants, their uncertain variants and the combination of those. Figure~\ref{fig:future} illustrates the dimensions that lead to complex notions of independence that can be explored. For example, embedded multivalued dependencies (EMVDs) are an
expressive class of data dependencies. Knowledge about which EMVDs hold on a given relational database would provide various options for query optimization. Con\-text\--specific variants form an orthogonal dimension, which refer to the specialization for a given notion of independence in the sense that the statement is not necessarily satisfied for all values of an attribute, but maybe only for specified fixed values on those attributes. For example, approximate context-specific ISs would be very helpful for cardinality estimation in query planning. Yet another orthogonal dimension can be considered by different choices of a data model. While we have limited our exposition to the relational model of data, other interesting data models include Web models such as JSON, RDF, or XML, or uncertain data models such as probabilistic and possibilistic data models. Of course, in artificial intelligence, machine learning, and statistics, the concepts of pure, saturated, and conditional independence are fundamental, specifically for distributed computations and graphical models.

\section{Conclusion}\label{s:conclusion}

We have initiated research on the discovery of independence concepts from data. As a starting point, we investigated the problem to compute the set of all independence statements that hold on a given data set. We showed that the decision variant of this problem is \NP-complete and \W{3}-complete in the arity. Under these fundamental limitations of general tractability, we designed an algorithm that discovers valid independence statements of incrementing arity. Once no valid statements can be found for a given arity, we are assured that no more valid statements exist. The behavior of the algorithm has been illustrated on various real-world benchmark data sets, showing that valid statements of low arity can be found efficiently on larger data sets, while identifying valid statements of higher arity is costly as expected from the hardness results of the problem. We have further illustrated how to adapt our algorithm to the new notion of an approximate independence statement, which only needs to hold with a given threshold. Approximate independence statements indicate with which ratio an independence statement holds on a given data set, which is useful knowledge for many applications such as cardinality estimation in query planning. We have outlined various directions of future research with more advanced notions of independence that have huge application potential in relational and probabilistic databases, but also for graphical models in artificial intelligence.

\bibliographystyle{spmpsci}      
\bibliography{biblio}

\end{document}